\newcounter{one}
\def\ketbra#1#2{\ket{#1}\!\bra{#2}}
\newcommand{\expect}[1]{\langle #1 \rangle}
\newtheorem{theorem}{Theorem} 
\newtheorem{definition}{Definition}
\def\QED{\mbox{\rule[0pt]{1.5ex}{1.5ex}}}
\def\endproof{\hspace*{\fill}~\QED\par\endtrivlist\unskip}
\newcommand{\eq}[1]{\begin{align}
  #1
\end{align}}
\newcommand\calD{{\cal D}}
\newcommand\calF{{\cal F}}
\newcommand{\beq}{\begin{equation}}
\newcommand{\eeq}{\end{equation}}
\begin{document}
\title{Improvement of Speed Limits: Quantum Effect on the Speed in Open Quantum Systems}
%\title{Fisher information matrix as resource measure in the resource theory of asymmetry for arbitrary connected Lie group}
\author{Kotaro Sekiguchi}
\email{s2010394@edu.cc.uec.ac.jp}
\author{Satoshi Nakajima}
\affiliation{Graduate School of Informatics and Engineering, The University of Electro-Communications,1-5-1 Chofugaoka, Chofu, Tokyo 182-8585, Japan}
\author{Ken Funo}
\affiliation{Department of Applied Physics, The University of Tokyo, 7-3-1 Hongo, Bunkyo-ku, Tokyo 113-8656, Japan}
\author{Hiroyasu Tajima}
\email{hiroyasu.tajima@uec.ac.jp}
\affiliation{Graduate School of Informatics and Engineering, The University of Electro-Communications,1-5-1 Chofugaoka, Chofu, Tokyo 182-8585, Japan}
\affiliation{JST, PRESTO, 4-1-8 Honcho, Kawaguchi, Saitama, 332-0012, Japan}

\begin{abstract}
In the context of quantum speed limits, it has been shown that the minimum time required to cause a desired state conversion via the open quantum dynamics can be estimated using the entropy production. 
However, the established entropy-based bounds tend to be loose, making it difficult to accurately estimate the minimum time for evolution. 
In this research, we have combined the knowledge of the entropy-based speed limits with that of the resource theory of asymmetry (RTA) and provided much stricter inequalities.
Our results show that the limitation on the change rate of states and expectation values can be divided into two parts: quantum coherence for energy (i.e., asymmetry) contributed by the system and the heat bath and the classical entropy-increasing effect from the bath.
As a result, our inequalities demonstrate that the difference in the speed of evolution between classical and quantum open systems, i.e., the quantum enhancement in speed, is determined by the quantum Fisher information, which measures quantum fluctuations of energy and serves as a standard resource measure in the resource theory of asymmetry. 
We further show that a similar relation holds for the rate of change of expectation values of physical quantities.
\end{abstract}

\maketitle

\section{Introduction}
The Quantum Speed Limits (QSLs) are inequalities that evaluate the minimum time required to transition from one quantum state to another. They were initially formulated for isolated quantum systems \cite{mandelstam1991uncertainty, margolus1998maximum,Braustein1994-geometry,UHLMANN1992,Anandan1990-geometry,Fleming1973-unitarybound} as generalizations of the time-energy uncertainty relations, and afterwards they have been extended to open quantum systems \cite{openQSL1,openQSL2,openQSL3,FSS,VS, Nakajima2022,Deffner2017,Pintos2022,MONDAL2016689}. 
They provide universal constraints on the rates of state changes across various dynamics, regardless of the specific characteristics of the system. This universality of the QSLs makes them invaluable for advancing fields such as quantum computing \cite{Lloyd2000}, quantum metrology \cite{PhysRevLett.112.120405}, quantum optical control \cite{opt1,opt2,opt3}, and quantum thermodynamics \cite{PhysRevLett.118.100602}.

The universality of quantum speed limits reflects their origin in the fundamental geometric properties of quantum state space \cite{Pires2016-geometric,Braustein1994-geometry,Anandan1990-geometry,Deffner2017,MONDAL2016689}. When an arbitrary distance $D$ between quantum states is introduced, the distance $D(\rho, \rho(\tau))$ between the state $\rho$ at time $0$ and the state $\rho(\tau)$ at time $\tau$ is always bounded by the integral of the metric $g$ induced by $D$, as $D(\rho, \rho(\tau)) \leq \int_0^\tau \sqrt{g_{tt}} \, dt$.
Many speed limits for open quantum systems \cite{openQSL1,openQSL2,openQSL3} have essentially been derived based on this idea.
However, the metric of state space itself does not necessarily have connection to physically meaningful quantities, and the resulting inequalities have not bounded $D(\rho, \rho(\tau))$ in terms of easily interpretable quantities. Consequently, the entropy-based quantum speed limits have recently been given \cite{FSS,VS}, inspired by speed limits for classical open systems \cite{SFS,Ohzeki}.
These entropy-based quantum speed limits constrain the rate of state change by two quantities with clear physical meanings: the thermodynamic entropy production rate and energy variance. 
The first quantity, the entropy production rate \cite{breuer2002theory}, represents the irreversible changes due to interactions with a heat bath. The term containing the entropy production rate in the QSLs has a classical counterpart in the speed limits of classical open systems \cite{SFS,Ohzeki}. The second quantity, energy variance, does not appear in the classical speed limit, and therefore the energy variance has been considered to describe the quantum effect in the QSL for open quantum systems. 
By using these quantities, the entropy-based QSLs have succeeded in providing intuitive constraints on the time required for state transformation.

Despite these advantages, there is a concern about the existing entropy-based QSLs for open quantum systems; they are not sufficiently tight.
Since the primary goal of QSLs is to constrain the minimum state-transition time, the tightness of these inequalities is a critical issue. 
To tackle this problem, we pay attention to the energy variance, which quantifies the energy fluctuation originating from both the quantum superpositions and classical probability mixtures. 
Since the energy fluctuation does not constrain the classical speed limit, we expect that only the quantum part of the energy fluctuation is relevant to the QSLs. 
This inference is supported by numerical calculations (see Figure \ref{fig:kizon}). The figure shows that neither of the existing inequalities provides a good bound on the actual rate of state changes. %{\color{red}This fact starkly contrasts the classical speed limits, which provide considerably good bounds under similar conditions (see Figure \ref{fig:koten}).}

In this article, based on the above considerations, we improve the speed limit for state transitions in open quantum systems. 
Our speed limit uses the quantum fluctuations of energy and the entropy production rate. 
As an indicator of quantum fluctuations, we employ the quantum Fisher information (also known as the skew information), a standard resource measure in the resource theory of asymmetry~\cite{Gour2008resource,skew_resource,Marvian_thesis,Marvian_distillation,Takagi_skew,Marvian2022operational,Kudo,YT,YT2,ST}, a resource theory generally treating the quantum fluctuations of conserved quantities including energy as resource. 
Recent developments in uncertainty relations \cite{Luo,Hansen,gibilisco2007uncertainty, gibilisco2011refinement} and the quantum Fisher information \cite{min_V_Petz,min_V_Yu} have revealed that the quantum Fisher information corresponds to the ``quantum fluctuation part" in the energy variance.
Therefore, our bound can be understood as a tighter speed limit for open quantum systems in which the classical fluctuations are removed from the energy variance terms that appeared in the previous bounds. 
We also improved the ``classical dissipation part" of the existing speed limit, i.e. the term of the entropy production, by using an improved version of the mobility.
The numerical calculations confirm that our bound is significantly tighter than the previous bounds.

Furthermore, we provide similar evaluations for currents, with the speed limits of state changes being a special case. 
The obtained bounds relate the currents of physical quantities to the quantum coherence of the energy and the entropy production.
Our inequalities reflect the decomposition of the speeds of changes of states and expectation values of physical quantities in open quantum systems into components attributable to isolated system dynamics and irreversible environmental changes, demonstrating that these speeds are governed solely by quantum fluctuations of energy and the increase in entropy due to the heat bath.

\section{Preliminary}\label{sec:preliminary}
In this section, we introduce the prerequisite knowledge necessary to describe the results of this paper: the quantum master equation (i.e., the Gorini–Kossakowski–Sudarshan–Lindblad: GKSL equation), thermodynamic quantities in the stochastic thermodynamics, the methods to quantify the quantum superposition between energy eigenstates in the resource theory of asymmetry, and the previous entropy-based speed limits for open quantum systems.

\subsection{Quantum master equation}
As the setup, we consider a quantum open system obeying the following quantum master equation \cite{breuer2002theory}:
\begin{align}
  \dot{\rho} &= -\frac{i}{\hbar}[H(t),\rho(t)] + \mathcal{D}[\rho(t)] \label{eq:Lindblad}.
\end{align}
Here, $H(t)$ is the system Hamiltonian, and $\mathcal{D}[\rho(t)]$ is the super operator representing the interaction with the external system, which is expressed as follows:
\begin{align}
  \mathcal{D}[\rho(t)] = \sum_{\omega_t,\alpha}\gamma_\alpha(\omega_t) \qty( L_{\omega_t,\alpha} \rho(t) L_{\omega_t,\alpha}^\dag 
  - \frac{1}{2} \qty{ L_{\omega_t,\alpha}^\dag L_{\omega_t,\alpha}, \rho(t) } ) .
  \label{eq:Lindblad_dissipation}
\end{align}
In the above equation, $L_{\omega_t,\alpha}$ is called the Lindblad operator, which is defined by the following equation:
\begin{align}
  L_{\omega_t,\alpha} = \sum_{\omega_t=\epsilon_m(t) - \epsilon_n(t)}\ketbra{\epsilon_n(t)}{\epsilon_n(t)}L_\alpha\ketbra{\epsilon_m(t)}{\epsilon_m(t)} .
\end{align}
Here, $\ket{\epsilon_n(t)}$ represents the energy eigenstates of the system, and $\epsilon_n(t)$ denotes their corresponding energy eigenvalues. The quantity $\omega_t$ is the energy gap associated with the transition between states induced by $L_{\omega_t, \alpha}$. That is, $L_{\omega_t, \alpha}$ is an operator that describes the jump between energy eigenspaces with an energy gap of $\omega_t$. 
Furthermore, $L_{\omega_t, \alpha}^\dagger$ corresponds to the reverse transition, and the relation $L_{\omega_t, \alpha}^\dagger = L_{-\omega_t, \alpha}$ holds.
We also assume that equation \eqref{eq:Lindblad_dissipation} satisfies detailed balance. In other words, the parameter $\gamma_\alpha\qty(\omega_t)$, which represents the frequency of jumps between states, satisfies the following detailed balance condition:
\begin{align}
  \gamma_\alpha(-\omega_t) = \gamma_\alpha(\omega_t) e^{-\beta\omega_t}. \label{eq:detail_balance}
\end{align}
This condition is a sufficient condition system's stable state to be a Gibbs state at inverse temperature $\beta$. 
%\caution{It is possible to derive similar results without assuming the detailed balance condition \eqref{eq:detail_balance} by allowing slight changes in the equations. We discuss this point in Appendix XXX.}

For the latter convenience, we remark that we can decompose the dissipative term $\calD[\rho(t)]$ into a ``unitary part" and the ``classical dissipative part"\cite{FSS}.
Using the diagonal decomposition of $\rho(t)=\sum_n p_n(t) \ketbra{n(t)}{n(t)}$, we can write $\mathcal{D}\qty[\rho(t)] = \mathcal{D}_d\qty[\rho(t)] + \mathcal{D}_{nd}\qty[\rho(t)]$. 
Here, $\mathcal{D}_d\qty[\rho(t)]$ and $\mathcal{D}_{nd}\qty[\rho(t)]$ are defined as
\begin{align}
  \mathcal{D}_d\qty[\rho(t)] &:= \sum_n \bra{n(t)} \mathcal{D} \ket{n(t)} \ketbra{n(t)}{n(t)}, \\
  \mathcal{D}_{nd}\qty[\rho(t)] &:= \sum_{m\neq n} \bra{m(t)} \mathcal{D} \ket{n(t)} \ketbra{m(t)}{n(t)},
\end{align}
representing the diagonal and off-diagonal components of the dissipative term in the diagonal basis of $\rho(t)$.

    The important point is that $\calD_{nd}[\rho(t)]$ can be understood as the unitary part of $\calD[\rho(t)]$, because we can construct an Hermitian operator $H_{\calD}$ which works as an effective Hamiltonian of $\calD_{nd}[\rho(t)]$ as follows:
\begin{align}
  \mathcal{D}_{nd}[\rho] = - \frac{i}{\hbar} \qty[H_\mathcal{D}(t), \rho(t)].
\end{align}
The concrete definition of $H_{\calD}$ is as follows:
\begin{align}
  H_\mathcal{D}(t) = \sum_{m\neq n} \frac{i\hbar \bra{m}\mathcal{D}\qty[\rho(t)]\ket{n}}{p_n - p_m}\ketbra{m}{n}.
\end{align}
Therefore, the quantum master equation is expressed in the following form:
\begin{align}
  \dot{\rho} = -\frac{i}{\hbar}[H(t)+H_{\calD}(t),\rho(t)] + \mathcal{D}_d\qty[\rho(t)] .\label{eq:alt_decom}
\end{align}

\subsection{Thermodynamic quantities}
Next, we define the thermodynamic quantities in open quantum systems \cite{breuer2002theory} used in this paper.
We define the von Neumann entropy flux and the heat flux, and based on these two quantities, we define the entropy production rate.
The von Neumann entropy flux is defined as follows:
\begin{align}
  \dot{S} &:= -\Tr \qty[\dot{\rho}(t) \ln \rho(t)] .\label{eq:v.N.ent_flux}
\end{align}
The von Neumann entropy flux represents the entropy change of the primary system.

The heat flux is defined as follows:
\begin{align}
  \dot{Q} &:= \Tr \qty[\dot{\rho}(t) H(t)]. \label{eq:heat_flux}
\end{align}
The heat flux is used to represent the energy change rate of the system due to the change of the density matrix.
When we assume the total energy conservation of the system and the heat bath, the minus of the heat flux $-\dot{Q}$ is equal to the energy change rate of the heat bath.

Using these quantities, the entropy production rate is defined as follows:
\begin{align}
  \dot{\sigma} := \dot{S} - \beta\dot{Q} .\label{eq:entropy_production}
\end{align}
Here, $\beta$ represents the inverse temperature of the bath.

Let us introduce two important properties of the entropy production rate. First, this quantity can be interpreted as the rate of thermodynamic entropy change of the total system. We are now considering the master equation, and therefore, the relaxation rate of the heat bath is assumed to be much faster than that of the system. In other words, the heat bath receives thermal energy through a quasistatic process. Consequently, we can interpret that the entropy change rate of the heat bath is equal to $-\beta \dot{Q}$. Therefore, $\dot{\sigma}$ is the sum of the entropy change rate of the system and the entropy change rate of the heat bath and is thus equal to the rate of thermodynamic entropy change of the total system.
Second, this quantity represents the rate of decrease of the relative entropy between the state $\rho$ and the Gibbs state. 
To be concrete, if we denote the time evolution map following the master equation \eqref{eq:Lindblad} as $\Lambda_t$, then $\dot{\sigma}$ satisfies the following \cite{spohn-Lebowitz1978}:

\begin{align}
\dot{\sigma}=\lim_{t\rightarrow 0}\frac{D(\rho\|\rho_{\beta|H})-D(\Lambda_t(\rho)\|\Lambda_t(\rho_{\beta|H}))}{t}
\end{align}
where $\rho_{\beta|H}$ is the Gibbs state with the Hamiltonian $H$ and the inverse temperature $\beta$.
Therefore, by the monotonicity of the relative entropy, we can easily obtain the non-negativity of the entropy production, and thus the second law of thermodynamics holds.

The entropy production rate can be represented using the following state transition probability \cite{FSS}:

\begin{align}
  W^{\omega,\alpha}_{mn} := \gamma_\alpha(\omega)\abs{\bra{m}L_{\omega,\alpha}\ket{n}}^2 .\label{eq:rate_matrix}
\end{align}
Here, the quantum state at time $t$ can be diagonalized as $\rho(t)=\sum_n p_n(t) \ketbra{n(t)}{n(t)}$, and we denote them as $p_n,\ket{n}$.

Let us represent the entropy production rate using this quantity.
First, we transform the von Neumann entropy flux and heat flux using Eqs. \eqref{eq:Lindblad} and \eqref{eq:rate_matrix} as follows \cite{FSS}:

\begin{align}
  -\beta\dot{Q} &= \sideset{}{^\prime}\sum_{\omega,\alpha,n,m} W^{\omega,\alpha}_{mn} p_n \ln \frac{W^{\omega,\alpha}_{mn}}{W^{-\omega,\alpha}_{nm}},\\
  \dot{S} &= - \sum_{\omega,\alpha,n,m} W^{\omega,\alpha}_{mn} p_n \ln p_m + \sum_{\omega,\alpha,n,m} W^{\omega,\alpha}_{mn} p_n \ln p_n.
\end{align}
Here, $\beta\omega=\ln \frac{W^{\omega,\alpha}_{mn}}{W^{-\omega,\alpha}_{nm}}$ is derived from the detailed balance condtion. Furthermore, $\sideset{}{^\prime}\sum$ means that we sum over all $\omega,\alpha,n,m$ except for $(m=n)\land (\omega=0)$. Thus, the entropy production rate can be represented as follows:

\begin{align}
  \dot{\sigma} = \sideset{}{^\prime}\sum_{\omega,\alpha,n,m} W^{\omega,\alpha}_{mn} p_n \ln \frac{W^{\omega,\alpha}_{mn}p_n}{W^{-\omega,\alpha}_{nm}p_m} \ .  \label{eq:ent_production2}
\end{align}
Here, since

\begin{align}
  \sideset{}{^\prime}\sum_{\omega,\alpha,n,m} W^{\omega,\alpha}_{mn} p_n  &= \sideset{}{^\prime}\sum_{\omega,\alpha} \gamma_{\alpha}(\omega)\Tr\qty[L_{\omega,\alpha}\rho L_{\omega,\alpha}^\dag] ,\label{eq:population}\\
  \sideset{}{^\prime}\sum_{\omega,\alpha,n,m} W^{-\omega,\alpha}_{nm} p_m  &= \sideset{}{^\prime}\sum_{\omega,\alpha} \gamma_{\alpha}(-\omega)\Tr\qty[L_{-\omega,\alpha}\rho L_{-\omega,\alpha}^\dag],
\end{align}
the following relation holds:
\begin{align}
\sideset{}{^\prime}\sum_{\omega,\alpha,n,m} W^{\omega,\alpha}_{mn}p_n = \sideset{}{^\prime}\sum_{\omega,\alpha,n,m} W^{-\omega,\alpha}_{nm}p_m.
\end{align}
That is, $\dot{\sigma}$ is the relative entropy between two unnormalized but equally summed probability distributions $\{W^{\omega,\alpha}_{mn} p_n\}$ and $\{W^{-\omega,\alpha}_{nm}p_m\}$.
Therefore, by the non-negativity of the relative entropy between two unnormalized but equally summed probability distributions \cite{Shiraishi2019}, Eq.\eqref{eq:ent_production2} is non-negative, which is another proof of the second law of thermodynamics.

\subsection{Quantum fluctuations}
The main purpose of this paper is to investigate the influence of quantum fluctuations on the speed limit of the currents of physical quantities in open quantum systems and to improve the speed limit of open quantum systems.

There is a criterion for quantum fluctuations given by Luo \cite{Luo}. 

\begin{definition}
  A quantum fluctuation is defined as a function $Q_\rho(H)$ that satisfies the following conditions for a quantum state $\rho$ and a physical quantity $H$:
  \begin{enumerate}
    \item $0 \leq Q_\rho(H) \leq V_\rho(H)$.
    \item $\rho$ is pure state $\Rightarrow Q_\rho(H) = V_\rho(H)$.
    \item $\qty[\rho, H] = 0 \Rightarrow Q_\rho(H) = 0$.
    \item $Q_{\sum_i p_i \rho_i}(H) \leq \sum_i p_i Q_{\rho_i}(H)$.
  \end{enumerate}
\end{definition}
The first condition says that $Q_\rho(H)$ is a part of the variance.
The second condition requires that when the state is pure, namely when all of the fluctuation of $H$ is caused by the quantum superposition, $Q_\rho(H)$ is equal to the variance.
The third condition means that when the state $\rho$ commutes with $H$, namely when all of the fluctuation of $H$ is caused by the classical mixture, $Q_\rho(H)$ is equal to 0.
The last condition is the monotonicity of $Q_\rho(H)$ under the classical mixture.
Due to these four conditions, we can consider $Q_\rho(H)$ as the ``quantum part" of the variance.
Although there are infinitely many quantities that satisfy these four conditions (see below), these four conditions still work as ``necessary conditions" for the indicators of quantum fluctuation.

When a quantity satisfies Luo's criterion, the variance can be decomposed into ``fluctuations due to quantum superposition" and ``fluctuations due to classical mixture":
\eq{
V_\rho(H)=Q_\rho(H)+C_\rho(H).\label{variance-decom}
}
This $C_\rho(H):=V_\rho(H)-Q_\rho(H)$ is always non-negative due to Luo's criterion 1, always 0 for pure state Luo's criterion 2, and equal to the variance when $\rho$ and $H$ commute (when all variance is due to classical superposition) due to Luo's criterion 3. Because of these properties, $C_\rho(H)$ can be interpreted as the ``fluctuations due to classical probability."
This fact is a good explanation of why our main results improve previous studies.

Next, we give an important example of $Q_\rho(H)$: the SLD Fisher information.
The definition of SLD Fisher information is as follows:
\begin{definition}
  When the spectral decomposition of $\rho$ is $\sum_i p_i \ketbra{i}{i}$, the SLD Fisher information is defined as follows:
    \begin{align}
        \calF_\rho(H) := \sum_{i,j} \frac{2(p_i - p_j)^2}{p_i + p_j} \abs{ \bra{i} H \ket{j} }^2 .\label{eq:Fisher_def3}
    \end{align}
\end{definition}
The quantum Fisher information, or more precisely one-fourth of it, satisfies all four of Luo's conditions.
Furthermore, the following theorem holds for the SLD Fisher information \cite{min_V_Yu}:
\begin{theorem}
  \label{theo:SLD_Fisher2}
  For a quantum state $\rho$, the SLD Fisher information is equal to the minimum value of the average of the variance of the pure states in the ensemble $\qty{p_i,\ket{\phi_i}}$ of the quantum state $\rho$:
  \begin{align}
    \calF_\rho(H_S) = 4 \min_{p_i,\ket{\phi_i}} \sum_i p_i V_{\ket{\phi_i}}(H_S).
  \end{align}
\end{theorem}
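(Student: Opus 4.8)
The plan is to recast the claim as a convex-roof optimization and solve it in two movements: a Cauchy--Schwarz/Schur-complement upper bound, and an explicit optimal ensemble that saturates it. Writing an arbitrary pure-state ensemble of $\rho$ through its subnormalized vectors $\ket{\psi_j}:=\sqrt{q_j}\,\ket{\phi_j}$, so that $\sum_j\ketbra{\psi_j}{\psi_j}=\rho$, a direct expansion gives
\begin{align}
\sum_j q_j\,V_{\ket{\phi_j}}(H_S)=\Tr[\rho H_S^2]-\sum_j\frac{\bra{\psi_j}H_S\ket{\psi_j}^2}{\langle\psi_j|\psi_j\rangle}.
\end{align}
Hence minimizing the left-hand side is the same as maximizing $\sum_j b_j^2/a_j$ with $a_j:=\langle\psi_j|\psi_j\rangle$ and $b_j:=\bra{\psi_j}H_S\ket{\psi_j}$. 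Using the spectral data $\rho=\sum_k p_k\ketbra{k}{k}$ and $H_{kl}:=\bra{k}H_S\ket{l}$, I would first record the purely algebraic identity
\begin{align}
\Tr[\rho H_S^2]-\tfrac14\calF_\rho(H_S)=\sum_{k,l}\frac{2p_kp_l}{p_k+p_l}\,|H_{kl}|^2=:\calK,
\end{align}
obtained by symmetrizing $\Tr[\rho H_S^2]=\sum_{k,l}\tfrac{p_k+p_l}{2}|H_{kl}|^2$ and combining it with the definition of $\calF$. The theorem is then equivalent to the statement $\max\sum_j b_j^2/a_j=\calK$.

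Next I would prove $\sum_j b_j^2/a_j\le\calK$ for every ensemble. The elementary identity $b_j^2/a_j=\max_{t_j\in\bbR}(2t_jb_j-t_j^2a_j)$ turns the objective into $\max_{\{t_j\}}\bigl(2\Tr[H_S M]-\Tr[N]\bigr)$, where $M:=\sum_j t_j\ketbra{\psi_j}{\psi_j}$ and $N:=\sum_j t_j^2\ketbra{\psi_j}{\psi_j}$. Since $\sum_j\ketbra{\psi_j}{\psi_j}\otimes\bigl(\begin{smallmatrix}1&t_j\\ t_j&t_j^2\end{smallmatrix}\bigr)\succeq0$, the block operator $\bigl(\begin{smallmatrix}\rho&M\\ M&N\end{smallmatrix}\bigr)$ is positive semidefinite, so its Schur complement yields $N\succeq M\rho^{-1}M$ and hence $\Tr[N]\ge\Tr[M\rho^{-1}M]$ (restricting to $\mathrm{supp}\,\rho$ and reading $\rho^{-1}$ as the pseudo-inverse when $\rho$ is singular). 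Relaxing $M$ to range over all Hermitian operators supported on $\mathrm{supp}\,\rho$ and maximizing the unconstrained quadratic $2\Tr[H_S M]-\Tr[M\rho^{-1}M]$ gives the stationarity condition $\rho^{-1}M+M\rho^{-1}=2H_S$, i.e.\ $M^*_{kl}=\tfrac{2p_kp_l}{p_k+p_l}H_{kl}$, whose optimal value reduces (using stationarity) to $\Tr[H_S M^*]=\calK$. This establishes $\sum_j q_j\,V_{\ket{\phi_j}}(H_S)\ge\tfrac14\calF_\rho(H_S)$.

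Finally I would exhibit an ensemble saturating the bound. Let $G:=\rho^{-1/2}M^*\rho^{-1/2}$, a Hermitian operator with eigen-decomposition $G=\sum_j t_j\ketbra{g_j}{g_j}$, and set $\ket{\psi_j}:=\rho^{1/2}\ket{g_j}$. Then $\sum_j\ketbra{\psi_j}{\psi_j}=\rho$, so this is a legitimate ensemble, and $\sum_j t_j\ketbra{\psi_j}{\psi_j}=\rho^{1/2}G\rho^{1/2}=M^*$. The defining relation $\rho^{-1}M^*+M^*\rho^{-1}=2H_S$ rearranges to $\rho^{1/2}H_S\rho^{1/2}=\tfrac12(G\rho+\rho G)$, and evaluating this in $\ket{g_j}$ gives $b_j=\bra{\psi_j}H_S\ket{\psi_j}=t_j\langle\psi_j|\psi_j\rangle=t_ja_j$. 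Thus $t_j=b_j/a_j$ is precisely the maximizer of the variational step, and $\sum_j b_j^2/a_j=\sum_j t_j^2a_j=\Tr[M^*\rho^{-1}M^*]=\calK$, so the bound is attained and the minimum equals $\tfrac14\calF_\rho(H_S)$.

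The hard part will be the attainment step: the two inequalities driving the upper bound (the operator Cauchy--Schwarz/Schur complement, and the relaxation of $M$ to arbitrary Hermitian operators) must be made tight \emph{simultaneously}, and it is not a priori clear that a single ensemble achieves this. The construction through the eigenvectors of $G=\rho^{-1/2}M^*\rho^{-1/2}$ is what resolves it, the key algebraic fact being that the stationarity equation forces $\rho^{1/2}H_S\rho^{1/2}=\tfrac12(G\rho+\rho G)$ and thereby makes $b_j=t_ja_j$ hold automatically. A secondary technicality is singular $\rho$: one reads $\rho^{-1}$ and $\rho^{-1/2}$ as pseudo-inverses on $\mathrm{supp}\,\rho$, the exact reformulation $\sum_j q_j\langle\phi_j|H_S^2|\phi_j\rangle=\Tr[\rho H_S^2]$ surviving this unchanged. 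I would also remark that restricting to pure-state ensembles loses nothing, since the concavity of $\rho\mapsto V_\rho(H_S)$ ensures that refining any mixed decomposition into pure states can only decrease the average variance.
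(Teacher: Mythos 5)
Your proof is correct, but note that the paper itself does not prove this statement: Theorem~\ref{theo:SLD_Fisher2} is imported from the cited reference on the convex-roof characterization of the quantum Fisher information, and no argument for it appears anywhere in the manuscript, so there is no in-paper proof to compare against. Judged on its own, your argument is sound and essentially reconstructs the known proof. The reduction $\sum_j q_j V_{\ket{\phi_j}}(H_S)=\Tr[\rho H_S^2]-\sum_j b_j^2/a_j$ and the algebraic identity $\Tr[\rho H_S^2]-\tfrac14\calF_\rho(H_S)=\sum_{k,l}\tfrac{2p_kp_l}{p_k+p_l}|H_{kl}|^2$ both check out against the paper's definition \eqref{eq:Fisher_def3}; the variational rewriting $b_j^2/a_j=\max_{t_j}(2t_jb_j-t_j^2a_j)$, the Schur-complement bound $\Tr N\ge\Tr[M\rho^{-1}M]$, and the stationary point $M^*_{kl}=\tfrac{2p_kp_l}{p_k+p_l}H_{kl}$ with optimal value $\Tr[H_SM^*]=\calK$ are all correct; and the attaining ensemble $\ket{\psi_j}=\rho^{1/2}\ket{g_j}$ built from the eigenbasis of $G=\rho^{-1/2}M^*\rho^{-1/2}$ does satisfy $b_j=t_ja_j$ via the Lyapunov-type identity $\rho^{1/2}H_S\rho^{1/2}=\tfrac12(G\rho+\rho G)$, so both inequalities in the upper bound are saturated simultaneously, which is the genuinely nontrivial part and you handle it cleanly. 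Two cosmetic points: the closing remark about refining mixed decompositions is superfluous, since the theorem as stated already ranges only over pure-state ensembles; and for singular $\rho$ you should say explicitly that the terms of $\calK$ and the operator $M^*$ automatically vanish off $\mathrm{supp}\,\rho$, which is what makes the pseudo-inverse convention harmless.
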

If we define $I_\rho(X) = \calF_\rho(X)/4$, $I_\rho^f(X)$ has the properties of quantum fluctuations. 
Together with Theorem \ref{theo:SLD_Fisher2}, this shows that the SLD Fisher information is a good indicator of quantum fluctuations.
And although the quantum Fisher information with any other monotone metric satisfies Luo's criteria, the minimal decomposition as in Theorem \ref{theo:SLD_Fisher2} does not hold.
In this sense, the SLD-Fisher information is good as the indicator of the quantum fluctuation.
The definition of Fisher information for general metrics is given in Appendix \ref{Ap:Fisher_general}.

Furthermore, the SLD Fisher information is a standard resource measure in the Resource Theory of Asymmetry (RTA) for the $U(1)$ case \cite{Marvian2022operational}. We will discuss this in detail in Appendix \ref{Ap:RTA}.

\subsection{The entropy-based speed limits in open quantum systems}\label{sec:open_speed_limit}
One of the goals of this paper is to improve the existing entropy-based quantum speed limits.
Therefore, we here introduce the previous entropy-based quantum speed limits.
The first one is given as follows \cite{FSS}:
\begin{align}
  \norm{\dot{\rho}}_{\Tr} \leq \frac{1}{\hbar} \sqrt{V_\rho(H)} + \frac{1}{\hbar} \sqrt{V_\rho(H_\mathcal{D})}  + \sqrt{\frac{1}{2}\dot{\sigma}A}.\label{eq:FSS}
\end{align}
Here, $ \norm{X}_{\Tr}:=\frac{1}{2} \Tr \qty[\sqrt{X^\dagger X}]$.
The quantity $A$ is the \textit{activity} that is defined as follows:
\begin{align}
  A :=& \frac{1}{2}\sideset{}{^\prime}\sum_{\omega,\alpha,n, m} \qty(W^{\omega,\alpha}_{m,n}p_n + W^{-\omega,\alpha}_{n,m}p_m)\label{eq:FSSbound}
\end{align}

The activity represents how frequently the state transitions.
We call the bound \eqref{eq:FSS} the Funo-Shiraishi-Saito (FSS) bound hereafter.

The second one is given as an improvement of the FSS bound. This bound is given by Vu and Saito \cite{VS}:
\begin{align}
  \norm{\dot{\rho}}_{\Tr} \leq \frac{1}{\hbar} \sqrt{V_\rho(H)} + \frac{1}{\hbar} \sqrt{V_\rho(H_\mathcal{D})}  + \sqrt{\frac{1}{2}\dot{\sigma}M}.\label{eq:VSbound}
\end{align}
We call this the Vu-Saito (VS) bound hereafter.
Here, $M$ is a quantity called mobility:
\eq{
M := \sideset{}{^\prime}\sum_{\omega,\alpha,n, m} f(W^{\omega,\alpha}_{m,n}p_n, W^{-\omega,\alpha}_{n,m}p_m).
}
Here $f(a,b):=(a-b)/\log (a/b)$.
This quantity $M$ is smaller than the activity $A$, so the VS bound is always tighter than the FSS bound.

\section{Main results}\label{results}
\subsection{Speed limit of physical quantities}
Our first main result is the speed limit for the change rate of the physical quantities.
\begin{theorem}\label{th:current_SL}
  The speed limit of the expectation value (current) of a time-independent physical quantity in an open quantum system is given by the following inequality:
  \begin{align}
    \abs{\expect{\dot{Q}}_X\qty(\rho_t)} \leq \frac{1}{\hbar} \sqrt{F_{\rho_t}(H + H_{\calD})} \sqrt{V_{\rho_t}(X)} + \sqrt{\frac{1}{2}\dot{\sigma}M_X\qty(\rho_t)} ,\label{eq:current_SL_mobility}\\
    M_X\qty(\rho_t) := 4 \sum_{\omega,\alpha,n\ne m} \abs{\bra{m}X\ket{m}}^2 f(W^{\omega,\alpha}_{m,n}p_n, W^{-\omega,\alpha}_{n,m}p_m).
  \end{align}
  Here, $\expect{\dot{Q}}_X\qty(\rho_t):=\Tr[X\dot{\rho_t}]$ is the current of the physical quantity $X$, $F_{\rho_t}(H+H_{\calD})$ is the SLD Fisher information of the state family $\{e^{-i(H+H_{\calD})s}\rho_t e^{i(H+H_{\calD})s}\}_{s\in\mathrm{R}}$ representing the quantum fluctuation of $\rho_t$ with respect to $H+H_{\calD}$. $V_{\rho_t}(X)$ is the variance of $X$ in $\rho_t$, and $\dot{\sigma}$ is the entropy production rate.
\end{theorem}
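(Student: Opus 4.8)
The plan is to begin from the decomposed master equation \eqref{eq:alt_decom}, which writes the generator as a coherent part governed by the effective Hamiltonian $H+H_{\calD}$ plus the classical dissipator $\calD_d$. Because $X$ is time-independent, $\expect{\dot{Q}}_X(\rho_t)=\Tr[X\dot{\rho}_t]=-\tfrac{i}{\hbar}\Tr[X[H+H_{\calD},\rho_t]]+\Tr[X\calD_d[\rho_t]]$, so by the triangle inequality it suffices to bound the coherent contribution by $\tfrac{1}{\hbar}\sqrt{F_{\rho_t}(H+H_{\calD})}\sqrt{V_{\rho_t}(X)}$ and the classical contribution by $\sqrt{\tfrac12\dot{\sigma}M_X(\rho_t)}$ separately.

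For the coherent part I would read it as the derivative at $s=0$ of $\Tr[X\rho_s]$ along the unitary family $\rho_s=e^{-i(H+H_{\calD})s}\rho_t\,e^{i(H+H_{\calD})s}$, whose generator is exactly $-i[H+H_{\calD},\rho_t]$ and whose SLD Fisher information is $F_{\rho_t}(H+H_{\calD})$. Introducing the symmetric logarithmic derivative $L$ through $\partial_s\rho_s|_{s=0}=\tfrac12\qty{L,\rho_t}$, trace preservation gives $\Tr[\rho_t L]=0$, which allows the replacement of $X$ by $X-\expect{X}$, while expanding $L$ in the eigenbasis of $\rho_t$ reproduces $\Tr[\rho_t L^2]=F_{\rho_t}(H+H_{\calD})$ as in \eqref{eq:Fisher_def3}. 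The coherent term then equals $\tfrac{1}{2\hbar}\Tr[\rho_t\qty{X-\expect{X},L}]$, and the Cauchy--Schwarz inequality $\abs{\tfrac12\Tr[\rho_t\qty{A,B}]}\le\sqrt{\Tr[\rho_t A^2]}\,\sqrt{\Tr[\rho_t B^2]}$ for Hermitian $A,B$ (valid because $\tfrac12\Tr[\rho_t\qty{A,B}]$ is a positive semidefinite symmetric bilinear form) with $A=X-\expect{X}$ and $B=L$ yields $\tfrac{1}{\hbar}\sqrt{V_{\rho_t}(X)}\,\sqrt{F_{\rho_t}(H+H_{\calD})}$, the first term.

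For the classical part I would use $\Tr[X\calD_d[\rho_t]]=\sum_n\bra{n}X\ket{n}\bra{n}\calD[\rho_t]\ket{n}$ together with the population current $\bra{n}\calD[\rho_t]\ket{n}=\sum_{\omega,\alpha,m}(W^{\omega,\alpha}_{nm}p_m-W^{\omega,\alpha}_{mn}p_n)$, which follows from \eqref{eq:Lindblad_dissipation} and \eqref{eq:rate_matrix}. Relabeling indices and symmetrizing under the simultaneous exchange $\omega\to-\omega$, $n\leftrightarrow m$ (which leaves the detailed-balance-paired rates invariant) I expect to obtain $\Tr[X\calD_d[\rho_t]]=\tfrac12\sum_{\omega,\alpha,n,m}(\bra{m}X\ket{m}-\bra{n}X\ket{n})(W^{\omega,\alpha}_{mn}p_n-W^{-\omega,\alpha}_{nm}p_m)$. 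Setting $a=W^{\omega,\alpha}_{mn}p_n$ and $b=W^{-\omega,\alpha}_{nm}p_m$ and using $a-b=f(a,b)\ln(a/b)$, a Cauchy--Schwarz split of each summand into $(\bra{m}X\ket{m}-\bra{n}X\ket{n})\sqrt{f(a,b)}$ and $\sqrt{f(a,b)}\,\ln(a/b)$ produces the factor $\sum f(a,b)(\ln(a/b))^2=\sum(a-b)\ln(a/b)=2\dot{\sigma}$ from the current form of \eqref{eq:ent_production2}, and the factor $\sum(\bra{m}X\ket{m}-\bra{n}X\ket{n})^2 f(a,b)$. Bounding $(\bra{m}X\ket{m}-\bra{n}X\ket{n})^2\le 2\bra{m}X\ket{m}^2+2\bra{n}X\ket{n}^2$ and using the same exchange symmetry to equate the two resulting sums gives exactly $M_X(\rho_t)$, hence the second term $\sqrt{\tfrac12\dot{\sigma}M_X(\rho_t)}$.

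The step I expect to be most delicate is the index bookkeeping of the classical part: one must align the convention of $W^{\omega,\alpha}_{mn}$ with the detailed-balance pairing of forward and backward jumps so that the logarithmic mean $f$ and the entropy production \eqref{eq:ent_production2} emerge with the correct factors, and then verify that the elementary bound $(\bra{m}X\ket{m}-\bra{n}X\ket{n})^2\le 2\bra{m}X\ket{m}^2+2\bra{n}X\ket{n}^2$ combined with the $\omega\to-\omega$, $n\leftrightarrow m$ symmetry reproduces precisely the coefficient $4$ in $M_X(\rho_t)$ rather than a larger constant. By contrast the coherent part is a standard quantum-Fisher-information uncertainty relation once $L$ is identified, and should pose little difficulty.
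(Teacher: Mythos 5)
Your proposal is correct and follows the same skeleton as the paper's proof: the decomposition \eqref{eq:alt_decom}, the triangle inequality, a Cram\'er--Rao bound for the coherent term (you instantiate the SLD explicitly where the paper cites the generalized inequality of Appendix \ref{Ap:Cramer-Rao}), and a Cauchy--Schwarz estimate against the logarithmic mean $f$ and the entropy production for the dissipative term. The one place you genuinely diverge is the classical part: the paper keeps $\abs{\bra{m}X\ket{m}}$ paired with the full population current at each $m$ and applies Cauchy--Schwarz twice, whereas you antisymmetrize to the bond-current form $\tfrac12\sum(\bra{m}X\ket{m}-\bra{n}X\ket{n})(W^{\omega,\alpha}_{mn}p_n-W^{-\omega,\alpha}_{nm}p_m)$, apply Cauchy--Schwarz once, and then invoke $(x-y)^2\le 2x^2+2y^2$ together with the $\omega\to-\omega$, $n\leftrightarrow m$ relabeling and the symmetry $f(a,b)=f(b,a)$; this does reproduce the factor $4$ in $M_X$ and the same final term $\sqrt{\dot{\sigma}M_X/2}$. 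Your route has the minor advantage that the intermediate quantity $\sum(\bra{m}X\ket{m}-\bra{n}X\ket{n})^2 f$ is invariant under shifting $X$ by a constant and hence can be strictly tighter before the final relaxation to $M_X$, while the paper's double Cauchy--Schwarz avoids the antisymmetrization bookkeeping; both are valid and land on the identical bound.
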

The proof of Theorems \ref{th:current_SL} is given in Appendices \ref{proof1}.
This inequality indicates that the rate of change of the expectation value of any physical quantity $X$ can be bounded by the quantum energy fluctuation and entropy production rate. The entropy term constrains the changes due to the classical dissipative term $D_{d}[\rho(t)]$, while the quantum fluctuation term constrains the changes due to the commutator $-\frac{i}{\hbar}[H + H_{\calD}, \rho(t)]$.

\subsection{Improvements of speed limits for quantum states}
Theorem \ref{th:current_SL} provides two improvements of the entropy-based speed limit for quantum states:
\begin{theorem}\label{th:QFI_mobility_bound}
  The speed limit of the quantum state $\rho$ in the open system is given by the following inequality:
  \begin{align}
    || \dot{\rho} ||_{\Tr} \leq \frac{1}{2\hbar}\sqrt{F_\rho(H + H_{\cal{D}})} + \sqrt{\frac{1}{2}\dot{\sigma}M^\prime} ,\label{eq:QFI_mobility_bound} \\
    M^\prime := \sum_{\omega,\alpha,n \ne m} f(W^{\omega,\alpha}_{m,n}p_n, W^{-\omega,\alpha}_{n,m}p_m).
  \end{align}
  Furthermore, the following tighter inequality also holds:
  \begin{align}
    || \dot{\rho} ||_{\Tr} 
      &\leq \frac{1}{\hbar} \sqrt{F_{\rho_t}(H + H_{\cal{D}})} \sqrt{V_{\rho_t}(X^\prime)} + \sqrt{\frac{1}{2}\dot{\sigma}M_{X^\prime}\qty(\rho_t)}.\label{eq:current_mobility_bound}
  \end{align}
  Here, $X^\prime$ is defined by the diagonal decomposition of $\dot{\rho_t}$ as $X^\prime := \frac{1}{2}\mathrm{sgn}(\dot{\rho}_t)$,  where $\mathrm{sgn}(Y)$ is defined as $\sum_n \mathrm{sgn}(y_n)\ket{y_n}\bra{y_n}$ for the spectral decomposition of $Y=\sum_n y_n \ketbra{y_n}{y_n}$.
\end{theorem}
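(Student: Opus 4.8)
The plan is to derive both inequalities directly from Theorem \ref{th:current_SL} by specializing the observable to the single choice $X' = \frac{1}{2}\mathrm{sgn}(\dot{\rho}_t)$. The crucial observation is that this particular operator turns the current $\expect{\dot{Q}}_{X'}(\rho_t) = \Tr[X'\dot{\rho}_t]$ into exactly the trace-norm speed $\norm{\dot{\rho}}_{\Tr}$, so that the current speed limit collapses onto the state speed limit; the tighter bound \eqref{eq:current_mobility_bound} then becomes a direct corollary, and the simpler bound \eqref{eq:QFI_mobility_bound} follows by crude estimates on the two $X'$-dependent factors.

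First I would establish the key identity $\norm{\dot{\rho}}_{\Tr} = \Tr[X'\dot{\rho}_t]$. Writing the spectral decomposition $\dot{\rho}_t = \sum_n \lambda_n \ketbra{y_n}{y_n}$ of the Hermitian operator $\dot{\rho}_t$, one has $\mathrm{sgn}(\dot{\rho}_t)\dot{\rho}_t = \sum_n |\lambda_n|\, \ketbra{y_n}{y_n} = |\dot{\rho}_t|$, whence
\begin{align}
\Tr[X'\dot{\rho}_t] = \tfrac{1}{2}\Tr[\mathrm{sgn}(\dot{\rho}_t)\dot{\rho}_t] = \tfrac{1}{2}\Tr|\dot{\rho}_t| = \norm{\dot{\rho}}_{\Tr}.
\end{align}
Since the right-hand side is manifestly non-negative, $|\expect{\dot{Q}}_{X'}(\rho_t)| = \norm{\dot{\rho}}_{\Tr}$, and applying Theorem \ref{th:current_SL} with $X = X'$ yields \eqref{eq:current_mobility_bound} immediately.

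It then remains to deduce \eqref{eq:QFI_mobility_bound} from \eqref{eq:current_mobility_bound} by bounding its two data-dependent factors. The operator $X'$ has spectrum contained in $\{0, \pm\tfrac{1}{2}\}$, so $X'^2 \le \tfrac{1}{4}\unitmatrix$ and hence $V_{\rho_t}(X') = \Tr[\rho_t X'^2] - (\Tr[\rho_t X'])^2 \le \tfrac{1}{4}$, giving $\sqrt{V_{\rho_t}(X')} \le \tfrac{1}{2}$; this turns the first term of \eqref{eq:current_mobility_bound} into $\frac{1}{2\hbar}\sqrt{F_{\rho_t}(H+H_{\calD})}$. For the second term, each diagonal element $\bra{m}X'\ket{m}$ is an expectation value of $X'$ in $\ket{m}$ and therefore obeys $|\bra{m}X'\ket{m}| \le \tfrac{1}{2}$, so $M_{X'}(\rho_t) = 4\sum_{\omega,\alpha,n\ne m}|\bra{m}X'\ket{m}|^2 f(\cdots) \le \sum_{\omega,\alpha,n\ne m} f(\cdots) = M'$. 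Substituting both estimates reproduces \eqref{eq:QFI_mobility_bound} and confirms en passant that \eqref{eq:current_mobility_bound} is the tighter of the two.

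The only genuine subtlety I anticipate is that $X' = \tfrac{1}{2}\mathrm{sgn}(\dot{\rho}_t)$ depends on time through $\dot{\rho}_t$, whereas Theorem \ref{th:current_SL} is phrased for a time-independent observable. This is harmless because the current speed limit is an instantaneous, pointwise-in-$t$ inequality: at each fixed $t$ one freezes $X'$ to the constant operator $\tfrac{1}{2}\mathrm{sgn}(\dot{\rho}_t)$ and applies the theorem at that instant, with every quantity on both sides evaluated at the same $t$. A secondary point to treat carefully is the kernel of $\dot{\rho}_t$, on which $\mathrm{sgn}$ vanishes; but since that subspace contributes nothing to $\Tr[X'\dot{\rho}_t]$, the identity above is unaffected. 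Beyond these bookkeeping remarks the argument is purely a substitution into Theorem \ref{th:current_SL}, so I expect no deeper obstacle.
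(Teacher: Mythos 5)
Your proposal is correct and follows essentially the same route as the paper's own proof: substitute $X'=\tfrac{1}{2}\mathrm{sgn}(\dot{\rho}_t)$ into Theorem \ref{th:current_SL}, identify $\Tr[X'\dot{\rho}_t]=\norm{\dot{\rho}_t}_{\Tr}$, and then bound $V_{\rho_t}(X')\le\tfrac14$ and $M_{X'}(\rho_t)\le M'$ via $|\bra{m}X'\ket{m}|\le\tfrac12$. Your extra remarks on the instantaneous (frozen-$t$) application of Theorem \ref{th:current_SL} and on the kernel of $\dot{\rho}_t$ are sensible points of care that the paper leaves implicit.
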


The proof of Theorem \ref{th:QFI_mobility_bound} is given in Appendices \ref{proof2}.
The inequality \eqref{eq:QFI_mobility_bound} bounds the rate of state change $|| \dot{\rho} ||_{\Tr}$ by two quantities. One is the quantum fluctuation $\calF_{\rho}(H+H_{\calD})$ that bounds the changes due to the commutator $-\frac{i}{\hbar}[H + H_{\calD}, \rho(t)]$, and the other is the entropy production that bounds the changes due to the diagonal dissipative term $D_{d}[\rho(t)]$.
Since the diagonal dissipative term corresponds to the ``classical" dissipation, our result corresponds to decomposing the infinitesimal time evolution of an open quantum system into a unitary direction and a classical direction, bounding the unitary direction as if it were the time evolution of an isolated system, and bounding the classical direction as if it were the time evolution of a classical system. 
Our result implies that, in such a decomposition, the speed in the unitary direction is determined by the quantum Fisher information, which represents the quantum fluctuation, and the speed in the classical direction is determined by the entropy production rate, which represents the strength of dissipation. 

This understanding is fully consistent with the classical speed limits \cite{SFS,Ohzeki} that bound the speed of classical system evolution by the entropy production and the speed limits for isolated systems given in the context of asymmetry \cite{coherence_QSL_isolated} that bound the speed of time evolution of an isolated system is determined by the quantum Fisher information of the Hamiltonian driving the isolated system, although the isolated QSL is described with the Bures distance.
Therefore, intuitively, the speed limit for open quantum systems is a kind of sum of the speed limit for unitary time evolution and the speed limit for classical systems. 
However, in the case of the open quantum systems, the unitary time evolution differs from that of a simple isolated system by the amount of the effective Hamiltonian $H_{\calD}$ induced by the heat bath.
We will see this evaluation is very tight in the next subsection.

\subsection{Comparison with the previous entropy-based speed limits}

In \eqref{eq:QFI_mobility_bound}, the term involving the variance of energy in \eqref{eq:VSbound} has been replaced by the quantum Fisher information. This substitution eliminates the effect of classical energy fluctuations from the evaluation of the inequality \eqref{eq:VSbound}. In fact, the following relation holds:
\eq{
\frac{1}{2}\sqrt{\calF_\rho(H+H_{\calD})} \le \sqrt{V_\rho(H+H_{\calD})} \le \sqrt{V_\rho(H)} + \sqrt{V_\rho(H_{\calD})}.
}
Therefore, \eqref{eq:QFI_mobility_bound} is always tighter than \eqref{eq:VSbound} (and consequently \eqref{eq:FSSbound}).

To further visualize such comparisons, we perform a numerical analysis for a two-level system. Under the settings described in Appendix \ref{Ap:qutip}, we compute the time evolution of $\rho_t$, and calculate the variance, Fisher information, activity, mobility, and other quantities at each time step to compare and evaluate the various bounds and $\|\dot{\rho}\|_{\Tr}$.

First, we see how tight the previous bounds, the FSS bound and the VS bound, are for $\|\dot{\rho}\|_{\Tr}$ in Fig. \ref{fig:kizon}.
As can be seen, the improvement from activity to mobility is not so significant, and there is still a large gap between $\|\dot{\rho}\|_{\Tr}$.

Next, we show the improvement obtained by \eqref{eq:QFI_mobility_bound} in Figure \ref{fig:kaizen}. As can be seen from the figure, the result shows a significant improvement. This difference arises from two replacements. The first one is replacing the term containing the energy variance in the existing bounds with the quantum fluctuation (quantum Fisher information) of the energy. Since the 
quantum Fisher information reflects the nature of quantum fluctuations, whereas the variance includes contributions from thermal fluctuations (corresponding to the quantity $C_\rho(H)$ introduced in equation \eqref{variance-decom}). The new bounds exclude the influence of $C_\rho(H)$, resulting in a tighter bound compared to the FSS bound and the VS bound.
The second one is replacing the mobility $M$ in the entropy production term in the VS bound with the improved mobility $M'$. Therefore, our new bound becomes better in both the fluctuation term and the entropy-production term.

Furthermore, substituting $\frac{1}{2} \mathrm{sgn}(\dot{\rho})$ directly into the current bound for $X$ in the inequality \eqref{eq:current_SL_mobility} provides an even better upper bound on the rate of state change (=\eqref{eq:current_mobility_bound}). In this case, as well, it is important to note that the first term of \eqref{eq:current_mobility_bound} is proportional to the Fisher information, representing the effect of quantum fluctuations in energy. 

Finally, Figure \ref{fig:ryoushi} compares the classical term $\sqrt{\frac{\dot{\sigma}M}{2}}$ with $\|\dot{\rho}\|_{\mathrm{Tr}}$, excluding the Fisher information term. 
As shown in the figure, the classical term alone does not serve as an upper bound for $\|\dot{\rho}\|_{\mathrm{Tr}}$ (=smaller than $\|\dot{\rho}\|_{\mathrm{Tr}}$). 
Note that this quantity works as an upper bound on the rate of change of probability distributions in the case of the classical master equation. 
Therefore, from our results, it can be clearly seen that quantum fluctuations (Fisher information) positively influence the rate of change of quantum states.

\begin{figure}
  \centering
  \includegraphics[width=12cm]{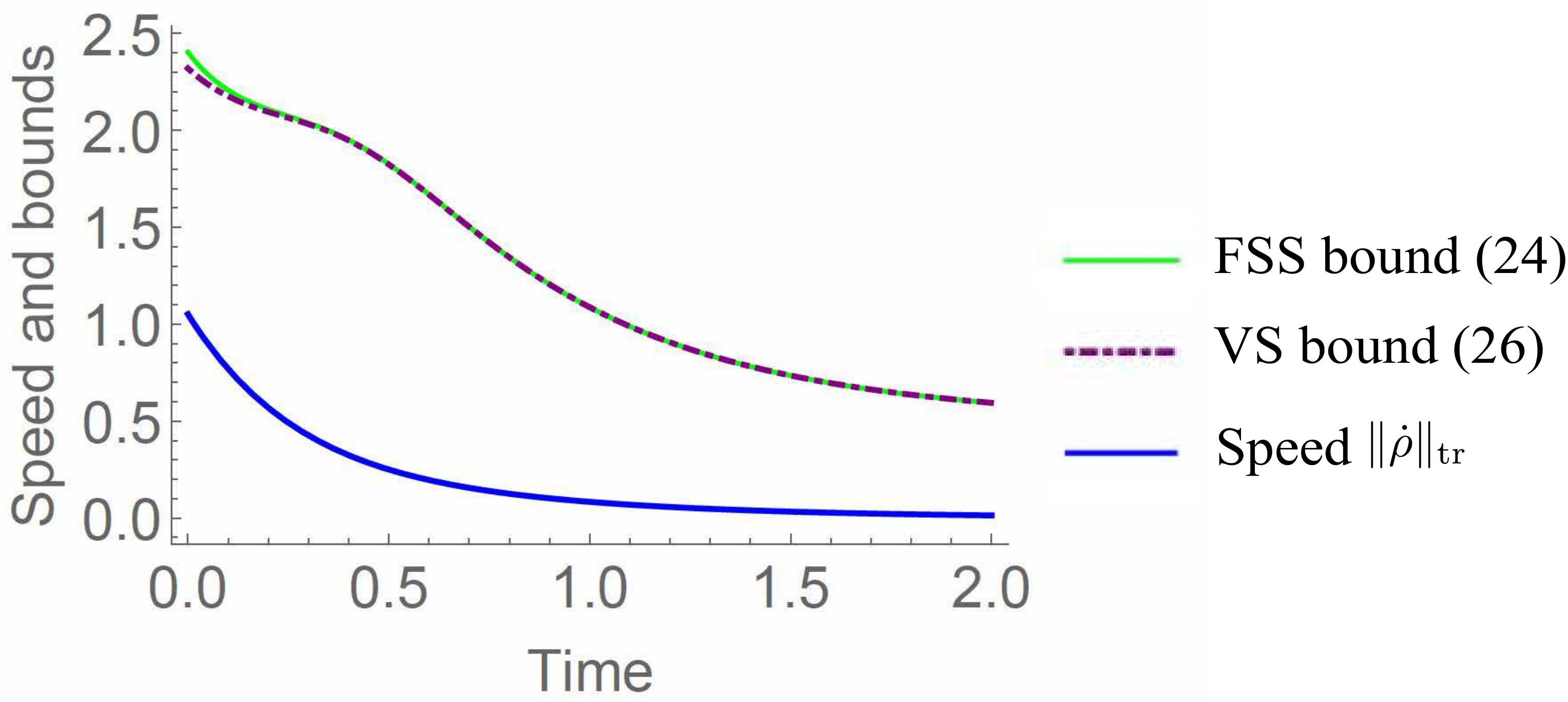}
  \caption{Comparition of the previous bounds (the FSS bound and the VS bound) with $\|\dot{\rho}_t\|_{\Tr}$. We can see that the VS bound is better than the FSS bound and that there is still a large gap between the VS bound and the speed $\|\dot{\rho}_t\|_{\Tr}$.}
  \label{fig:kizon}
\end{figure}
\begin{figure}
  \centering
  \includegraphics[width=12cm]{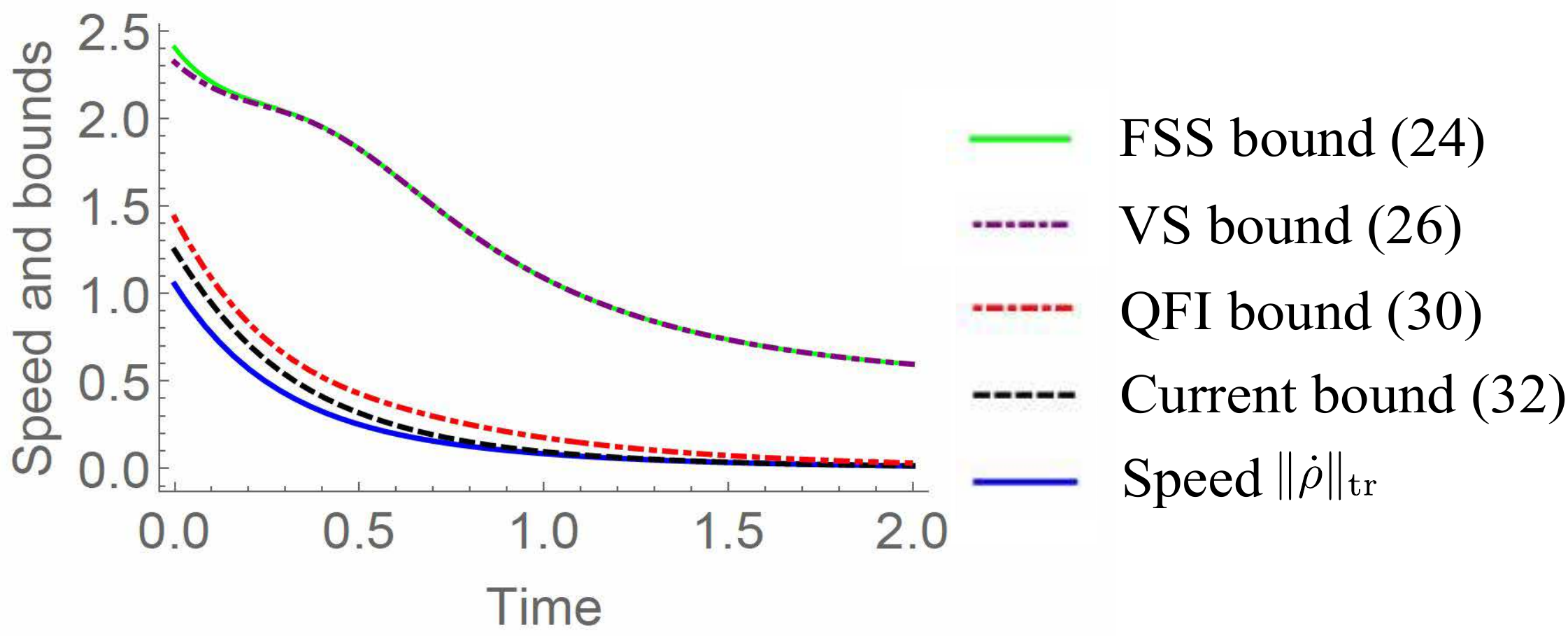}
   \caption{Comparition of the previous bounds (the FSS bound and the VS bound), the new bounds (the QFI bound and the current bound), and the speed $\|\dot{\rho}_t\|_{\Tr}$. We can see that the tightness of the new bounds is much better than that of the previous ones.}
   \label{fig:kaizen}
\end{figure}
\begin{figure}
  \centering
  \includegraphics[width=12cm]{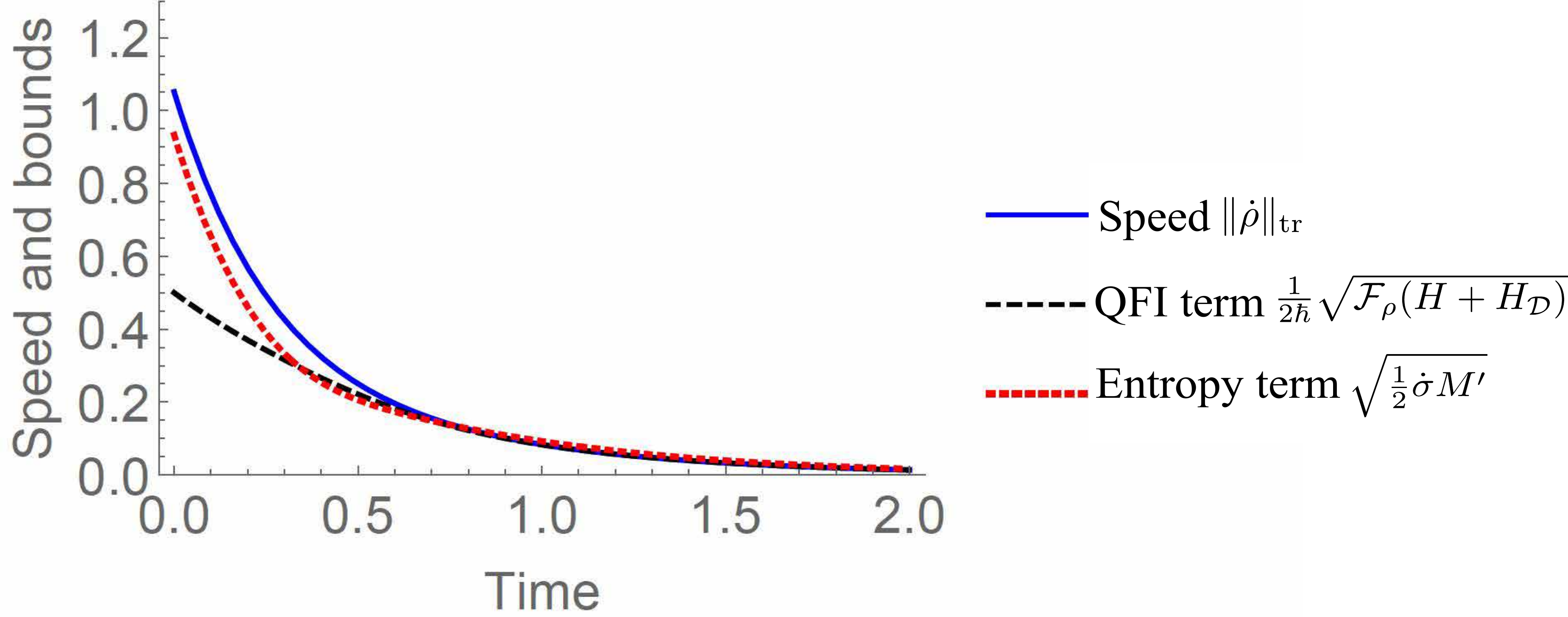}
   \caption{Comparition of $\norm{\dot{\rho}_t}_{\Tr}$ with the quantum term (=the QFI term) and the classical term (=the entropy production term) in the QFI bound. Both of these terms are smaller than $\norm{\dot{\rho}_t}_{\Tr}$, and thus each of these two terms does not work as an upper bound on $\norm{\dot{\rho}_t}_{\Tr}$ individually. Only when we combine them can we provide an upper bound on $\norm{\dot{\rho}}_{\Tr}$. This fact indicates that both quantum and classical effects are crucial in estimating the rate of state change in open quantum systems.}
   \label{fig:ryoushi}
\end{figure}

\section{Conclusion}
In this paper, we derived the speed limit of the expectation value of a time-independent physical quantity in an open quantum system.

\begin{align}
  \abs{\langle \dot{Q}\rangle_X\qty(\rho_t)} \leq \frac{1}{\hbar} \sqrt{F_{\rho_t}(H + H_{\mathcal{D}})} \sqrt{V_{\rho_t}(X)} + \sqrt{\frac{1}{2}\dot{\sigma}M_X\qty(\rho_t)} .\label{eq:current_SL2}
\end{align}

The proof of this theorem uses the generalized Cramér-Rao inequality for one-parameter quantum state families and the evaluated expectation value by the diagonal term of the dissipation term of the physical quantity using the Cauchy-Schwarz inequality. Furthermore, we showed that there exists a bound that is tighter than the FSS bound 
\eqref{eq:FSS} and the VS bound\eqref{eq:VSbound}.
\begin{align}
  || \dot{\rho} ||_{\Tr} \leq \frac{1}{2\hbar}\sqrt{F_\rho(H + H_{\mathcal{D}})} + \sqrt{\frac{1}{2}\dot{\sigma}M^\prime}.
\end{align}

These results are described using Fisher information, showing that the speed limit of the current of a physical quantity and the quantum state speed limit are limited by quantum fluctuations.

Regarding the relationships between coherence and the speed limit,
there is another inequality about the trade-off between the current, the entropy production rate, and coherence between degeneracies \cite{TF,Funo-Tajima}. We leave the relationship between this inequality and the present result as a future work.

\acknowledgments
HT acknowledges MEXT KAKENHI Grant-in-Aid for Transformative
Research A"as B ``Quantum Energy Innovation” Grant Numbers 24H00830 and 24H00831, and JST PRESTO No. JPMJPR2014, JST MOONSHOT No. JPMJMS2061. K. F. acknowledges support from JSPS KAKENHI Grant Number JP23K13036 and JST ERATO Grant Number JPMJER2302, Japan.

In this letter, the authors obtained the help of ChatGPT 4o and DeepL Translator to correct English expressions naturally.
These helped only in the final stages of revising the text in the manuscript; the authors alone derived the research ideas and results and wrote the original manuscript.

\bibliographystyle{quantum}
\bibliography{paper1}

\appendix
\section{Inner products}\label{Ap:inner_product_general_quantities}

The quantum Fisher information and the inner product of physical quantities are closely related. In fact, the Fisher information is defined differently depending on the inner product used. In this section, we introduce the inner product of physical quantities in a general form \cite{PETZ_2011}. 
First, we define a function called a standard operator monotone function. 

\begin{definition}
  For a function $f: (0, \infty) \rightarrow (0, \infty)$,
  \begin{enumerate}
    \item $f(1) = 1$,
    \item $f(x) = xf(x^{-1})$,
    \item $0 \leq A \leq B \Rightarrow f(A) \leq f(B)$
  \end{enumerate}
  are satisfied, the function $f$ is called a standard operator monotone function.
\end{definition}

This function $f$ is used to define the inner product in general.

\begin{definition}
  For a standard operator monotone function $f: (0, \infty) \rightarrow (0, \infty)$, a state $\rho$, and Hermite operators $A,B$, the inner product is defined as follows:
  \begin{align}
    \langle A, B \rangle_{\rho_t}^f \coloneqq \Tr \qty[ A \mathbb{J}^f_{\rho} (B) ]. \label{eq:inner_product}
  \end{align}
  Here, $\mathbb{J}^f_{\rho_t}$ is a super operator (a linear operator acting on operators) defined as follows:
  \begin{equation}
    \mathbb{J}^f_{\rho_t} = f(\mathbb{L}_\rho \mathbb{R}^{-1}_\rho)\mathbb{R}_\rho, \quad \mathbb{L}_\rho (X) = \rho X, \quad \mathbb{R}_\rho (X) = X \rho.
  \end{equation}
  The inner product defined here (referred to as the $f$-inner product) has the following properties:
  \begin{itemize}
    \item $\langle A, B \rangle_{\rho_t}^f = (\langle B, A \rangle_{\rho_t}^f)^*$.
    \item $\langle A, \sum_i b_i B_i \rangle_{\rho_t}^f = \sum_i b_i \langle A, B_i\rangle_{\rho_t}^f $.
    \item $\langle A, A \rangle_{\rho_t}^f \geq 0$.
  \end{itemize}  
  Using this inner product, the variance can be defined in general:
  \begin{align}
    V_{\rho_t}^f(A) \coloneqq \langle A - \langle A \rangle_{\rho_t}, A - \langle A \rangle_{\rho_t} \rangle_{\rho_t}^f. \label{eq:variance_f}
  \end{align}
\end{definition}
Furthermore, an important inequality holds for the general inner product.

\begin{theorem}
  For any inner product $\langle A, B\rangle$, the following inequality holds:
  \begin{align}
    \abs{\langle A, B\rangle}^2 \leq \langle A, A\rangle \langle B, B\rangle.
  \end{align}
  This inequality is called the Cauchy-Schwarz inequality.
\end{theorem}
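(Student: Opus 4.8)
The plan is to derive the inequality purely from the three defining properties of the $f$-inner product recorded just above the statement: conjugate symmetry $\langle A,B\rangle = (\langle B,A\rangle)^*$, linearity in the second slot, and positivity $\langle A,A\rangle \ge 0$. First I would observe that conjugate symmetry combined with linearity in the second argument forces conjugate-linearity in the first argument, so that $\langle\cdot,\cdot\rangle$ is a genuine sesquilinear form; this is the only structural fact needed, and it follows at once by applying the conjugation property to the linearity relation.

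The core step is to exploit positivity of the form evaluated on the one-parameter family $A - \lambda B$ with $\lambda \in \bbC$. Expanding $\langle A - \lambda B, A - \lambda B\rangle \ge 0$ by sesquilinearity gives
\begin{align}
\langle A,A\rangle - \lambda\langle A,B\rangle - \bar\lambda\langle B,A\rangle + \abs{\lambda}^2\langle B,B\rangle \ge 0 . \nonumber
\end{align}
Assuming first that $\langle B,B\rangle > 0$, I would substitute the minimizing choice $\lambda = \langle B,A\rangle / \langle B,B\rangle$. Using $\langle A,B\rangle = (\langle B,A\rangle)^*$ together with the reality of $\langle B,B\rangle$, the three $\lambda$-dependent terms collapse to the single term $-\abs{\langle A,B\rangle}^2/\langle B,B\rangle$, and the surviving inequality is exactly the claimed $\abs{\langle A,B\rangle}^2 \le \langle A,A\rangle\langle B,B\rangle$.

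The only remaining case is the degenerate one $\langle B,B\rangle = 0$, which I would dispatch separately. Here the expression above degenerates to the linear estimate $\langle A,A\rangle - 2\,\mathrm{Re}(\lambda\langle A,B\rangle) \ge 0$, required for every $\lambda \in \bbC$; aligning the phase of $\lambda$ with $\langle A,B\rangle$ and letting its modulus grow forces $\langle A,B\rangle = 0$, so both sides of the inequality vanish and it holds trivially. I do not expect any genuine obstacle, since the content is entirely standard; the one point demanding care is the bookkeeping of which argument is linear versus conjugate-linear, because an error there would flip signs in the cross terms and destroy the cancellation. The degenerate case is the only place where one cannot simply divide by $\langle B,B\rangle$, and so must be treated by the phase/limiting argument rather than by the optimal substitution.
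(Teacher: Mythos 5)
Your argument is the standard textbook proof of the Cauchy--Schwarz inequality and it is correct: conjugate-linearity in the first slot does follow from the two listed properties, the optimal choice $\lambda = \langle B,A\rangle/\langle B,B\rangle$ collapses the cross terms as you claim, and you correctly notice that positivity is only semi-definiteness here, so the case $\langle B,B\rangle=0$ genuinely requires the separate phase/limiting argument you give. The paper itself states this theorem without proof, treating it as a known fact about (semi-definite) sesquilinear forms, so there is no authorial proof to compare against; your write-up simply supplies the standard argument the authors implicitly rely on.
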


\section{Deffinition of Fisher information }\label{Ap:Fisher_general}
Using the $f$-inner product defined above, We can express the definition of Fisher information for many inner products at once.

First, we introduce a Hermite operator $L$ called a logarithmic derivative as follows.
\begin{definition}
  For a quantum state family $\qty{\rho_t}_t$ that is differentiable with respect to $t$, the logarithmic derivative $L$ is defined as follows:
  \begin{align}
    \frac{\partial \rho_t}{\partial t}= \mathbb{J}^f_{\rho_t}\qty(L). \label{eq:log_derivative_def}
  \end{align}
\end{definition}
By definition, $L$ depends on $\qty{\rho_t}_t$ and $f$, but it is customary to omit $f$ and write only $L$. Furthermore, it is possible to constract $L$ that satisfies this definition.

\begin{theorem}
  The logarithmic derivative $L$ that satisfies $\mathbb{J}_{\rho_t}^f(L) = \frac{\partial \rho_t}{\partial t}$ can be constructed as follows when $\rho=\sum_i p_i \ketbra{i}{i}$ is diagonalized:
  \begin{align}
    L = \sum_{i,j} \frac{1}{p_j f(\frac{p_i}{p_j})}\bra{i} \frac{\partial \rho_t}{\partial t} \ket{j} \ketbra{i}{j}.  \label{eq:logarithmic_derivative}
  \end{align}
\end{theorem}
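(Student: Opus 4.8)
The plan is to diagonalize the superoperator $\mathbb{J}^f_{\rho_t}$ in the operator basis $\{\ketbra{i}{j}\}$ induced by the eigenbasis of $\rho=\sum_i p_i\ketbra{i}{i}$, thereby reducing the operator equation \eqref{eq:log_derivative_def} to a collection of scalar equations that can be inverted componentwise.

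First I would observe that the left- and right-multiplication superoperators commute and act diagonally on this basis,
\begin{align}
\mathbb{L}_\rho(\ketbra{i}{j}) = p_i\,\ketbra{i}{j}, \qquad \mathbb{R}_\rho(\ketbra{i}{j}) = p_j\,\ketbra{i}{j},
\end{align}
so that each $\ketbra{i}{j}$ is a joint eigenvector. Consequently $\mathbb{R}_\rho^{-1}$ rescales it by $p_j^{-1}$, the product $\mathbb{L}_\rho\mathbb{R}_\rho^{-1}$ rescales it by $p_i/p_j$, and applying the functional calculus for $f$ followed by $\mathbb{R}_\rho$ gives
\begin{align}
\mathbb{J}^f_{\rho_t}(\ketbra{i}{j}) = p_j\, f(p_i/p_j)\,\ketbra{i}{j}.
\end{align}
Thus $\mathbb{J}^f_{\rho_t}$ is diagonal in this basis with eigenvalue $p_j f(p_i/p_j)$ on each basis element, and is therefore trivially invertible on the span of those elements with nonzero eigenvalue.

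Next I would expand the target as $\partial_t\rho_t=\sum_{i,j}\bra{i}\partial_t\rho_t\ket{j}\,\ketbra{i}{j}$ in the same basis and divide each component by the corresponding eigenvalue. This immediately produces the claimed formula \eqref{eq:logarithmic_derivative}, and applying $\mathbb{J}^f_{\rho_t}$ back to it recovers $\partial_t\rho_t$ by construction, which verifies \eqref{eq:log_derivative_def}. To complete the argument I would check that $L$ is Hermitian, as a logarithmic derivative should be: the operator-monotone property $f(x)=x f(x^{-1})$ yields the symmetry $p_j f(p_i/p_j)=p_i f(p_j/p_i)$, which, together with the Hermiticity of $\partial_t\rho_t$, makes the $(i,j)$ and $(j,i)$ coefficients complex conjugates of one another.

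The only point requiring care is the invertibility of $\mathbb{J}^f_{\rho_t}$, since the coefficient $1/[p_j f(p_i/p_j)]$ is ill-defined when $p_j=0$. I would handle this by restricting the construction to the support of $\rho$ and noting that, for a genuine differentiable family of states, $\partial_t\rho_t$ lies in the range of $\mathbb{J}^f_{\rho_t}$, so the componentwise inversion is consistent and the formula defines $L$ on the relevant subspace. This support subtlety is the only real obstacle; the algebraic heart of the proof is the simultaneous diagonalization established above, after which everything reduces to dividing by scalars.
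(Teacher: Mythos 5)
Your proposal is correct and follows essentially the same route as the paper: both reduce the problem to the identity $\mathbb{J}^f_{\rho_t}(\ketbra{i}{j}) = p_j f(p_i/p_j)\,\ketbra{i}{j}$ in the eigenbasis of $\rho$ and then invert componentwise. Your justification of that identity via the joint spectral calculus of $\mathbb{L}_\rho$ and $\mathbb{R}_\rho$ is somewhat cleaner than the paper's formal MacLaurin expansion of $f$, and the Hermiticity check and the support caveat are welcome additions that the paper's proof omits.
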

\begin{proof}
    By expanding $\mathbb{J}_{\rho_t}^f(L) $ in a MacLaurin series and using $\rho=\sum_i p_i \ketbra{i}{i}$,
    \begin{align*}
      \mathbb{J}_{\rho_t}^f(L) 
      &= \sum_m^\infty \frac{f^{(m)}(0)}{m!} \rho^m L \rho^{1-m} \\
      &= \sum_m^\infty \frac{f^{(m)}(0)}{m!} \sum_{ij} p_j \qty(\frac{p_i}{p_j})^m \bra{i} L \ket{j} \ketbra{i}{j}\\
      &= \sum_{ij} p_j \sum_m^\infty \frac{f^{(m)}(0)}{m!}  \qty(\frac{p_i}{p_j})^m \bra{i} L \ket{j} \ketbra{i}{j}\\
      &= \sum_{ij} p_j f\qty(\frac{p_i}{p_j}) \bra{i} L \ket{j} \ketbra{i}{j}.
    \end{align*}
    Also, if $\mathbb{J}_{\rho_t}^f(L) = \frac{\partial \rho_t}{\partial t}$, then
    $\bra{i}\mathbb{J}_{\rho_t}^f(L) \ket{j} = \bra{i} \frac{\partial \rho_t}{\partial t} \ket{j}$, so
    \begin{align*}
      \bra{i}\mathbb{J}_{\rho_t}^f(L) \ket{j} = p_j f\qty(\frac{p_i}{p_j}) \bra{i} L \ket{j} = \bra{i} \frac{\partial \rho_t}{\partial t} \ket{j} .
    \end{align*}
    Thus,
    \begin{align*}
      \bra{i} L \ket{j} = \frac{1}{p_j f\qty(\frac{p_i}{p_j})} \bra{i} \frac{\partial \rho_t}{\partial t} \ket{j} .
    \end{align*}
    Therefore,
    \begin{align}
      L = \sum_{i,j} \frac{1}{p_j f(\frac{p_i}{p_j})}\bra{i} \frac{\partial \rho_t}{\partial t} \ket{j} \ketbra{i}{j} \label{eq:logarithmic_derivative2}
    \end{align}
    is obtained.
\end{proof}
The Fisher information is defined using the logarithmic derivative.
\begin{definition}
  For a quantum state family $\qty{\rho_t}_t$ that is differentiable with respect to $t$, the Fisher information $J^f_{\rho_t}$ is defined as follows:
  \begin{align}
    J^f_{\rho_t} = \langle L, L \rangle_{\rho_t}^f. \label{eq:Fisher_def1}
  \end{align}
  Here, defining $\qty(\mathbb{J}^f_{\rho_t})^{-1}$ as the inverse function of $\mathbb{J}^f_{\rho_t}$, Eq. \eqref{eq:Fisher_def1} becomes:
  \begin{align}
    J^f_{\rho_t} = \Tr\qty[ \frac{\partial \rho_t}{\partial t} \qty(\mathbb{J}^f_{\rho_t})^{-1} \qty( \frac{\partial \rho_t}{\partial t} )  ]. \label{eq:Fisher_def2}
  \end{align}
  This property means that the Fisher information represents the square of the absolute value of the speed of the state with respect to the parameter $t$ in the quantum state family $\qty{\rho_t}_t$. Therefore, when using the Fisher information in our main results, we take the square root to align the dimension of the speed.
\end{definition}

\section{Resource Theory of Asymmetry}\label{Ap:RTA}
Resource theory is a theory that defines a state that can be easily prepared (free state) and an operation that can be easily performed (free operation), and difines a state that cannnot be realized by combining these as a resource, and quantitatively considers the resource. Resource theory is not uniquely determined, and different resource theories are obtained depending on how free states and free operations are defined.
To quantitatively evaluate the amount of resources, a resource measure is defined as a function as shown below \cite{RevModPhys.91.025001}.
\begin{definition}
  The resource measure is defined as a function $R(\rho)$ that satisfies the following conditions:
  \begin{enumerate}
    \item $R(\rho) \geq 0$.
    \item $R(\rho) = 0 \Rightarrow \rho$ is free state. 
    \item $\mathcal{E}$ is free operation $\Rightarrow R(\rho) \geq R(\mathcal{E}(\rho))$.
  \end{enumerate}
  Furthermore, when the restriction ``$R(\rho) = 0 \Leftarrow \rho$ is free state" is imposed on condition 2, $R(\rho)$ is called a faithful resource measure.
\end{definition}
It should be noted that the quantity defined as a resource measure depends on the resource to be considered.

Next, we describe the Resource Theory of Asymmetry (RTA) \cite{Marvian_thesis,Marvian_distillation,Marvian2022operational,Takagi_skew,skew_resource, YT,YT2,Kudo,ST} used to interpret our main results.

RTA is a resource theory that deals with dynamics with symmetry and treats quantum fluctuations of conserved quantities as resources. Because it involves the fundamental concept of symmetry in physics, it has a wide range of applications, from quantum measurements \cite{WAY_RToA1,WAY_RToA2,TN, TTK, ET2023, nakajima2024speed}, unitary gate implementation \cite{TSS,TSS2, TS, TTK, ET2023, nakajima2024speed}, coherence distribution \cite{Marvian2018,Lostaglio2018}, quantum speed limits \cite{TSL_RToA}, quantum error correction codes \cite{TS,e-EKZhou,e-EKYang,Liu1,Liu2, TTK}, quantum thermodynamics \cite{TTK,tajima2024gibbs}, to the black hole information escape problem \cite{TS, TTK}.

In this section, we introduce the foundation of RTA and commonly used resource indicators. Since there is no standard text on RTA, we formulate and introduce various results based on \cite{Marvian_thesis} and mix the contents of \cite{Marvian_thesis,Marvian_distillation,Marvian2022operational,Takagi_skew,skew_resource, YT,YT2,Kudo,ST}.

RTA is a resource theory that considers asymmetry as a resource. First, we consider a group to discuss symmetry and asymmetry. The definition of a group is given as follows.
\begin{definition}
  A set $G$ is a group if the operation ($\cdot$) on $G$ satisfies the following conditions:
  \begin{itemize}
    \item Associativity: $\forall a,b,c \in G, (a\cdot b)\cdot c = a\cdot (b\cdot c)$.
    \item Identity element: $\exists e \in G, \forall a \in G, a\cdot e = e\cdot a = a$.
    \item Inverse element: $\forall a \in G, \exists a^{-1} \in G, a\cdot a^{-1} = a^{-1}\cdot a = e$.
  \end{itemize}
\end{definition}
A group is an abstract concept that generalizes reversible operations. The existence of the identity element represents an operation that does nothing, and the existence of the inverse element represents an operation that cancels out the original state for any operation. Furthermore, groups play an important role in dealing with symmetry. For example, the threefold symmetric group and the rotation and symmetrical movements of an equilateral triangle can be associated with each other, but the shape of an equilateral triangle does not change due to the action of a group (operation such as rotation). Properties that are invariant under the action of a group are called symmetries.
Next, we associate a group that deals with such symmetry with a quantum system. The transformation of an isolated quantum system is represented by a unitary operator, as can be seen from the Schrödinger equation. We associate this unitary operator with a group as follows.

\begin{definition}
  A map $U: G \rightarrow \mathcal{H}$ from a group $G$ to unitary operators on a Hilbert space $\mathcal{H}$ is called a unitary representation of the group $G$ when it satisfies the following property:
  \begin{align}
    U(g_1)\cdot U(g_2) = U(g_1\cdot g_2) .\label{eq:group_rep}
  \end{align}
\end{definition}

From Eq. (\ref{eq:group_rep}), we find that the unit element and the inverse element of the group $G$ are related as follows:
\begin{align}
  U(e) = I,\quad U(g^{-1}) = U(g)^{-1}.
\end{align}
Furthermore, we define a projection unitary representation as an extension of the unitary representation. First, from Eq. (\ref{eq:group_rep}), for any $g_1,g_2 \in G$,
\begin{align}
  U(g_1)U(g_2) \rho U(g_2)^{-1}U(g_1)^{-1} = U(g_1g_2) \rho U(g_1g_2)^{-1}.
\end{align}
It is understood that this holds. We call the extension of $U(g)$ that satisfies this as a projection unitary representation.
\begin{definition}
  For a map $U: G \rightarrow \mathcal{H}$ from a group $G$ to a set of unitary operators on a Hilbert space, if there exists a complex number $e^{i\omega(g_1,g_2)}$ that satisfies
  \begin{align}
    U(g_1)U(g_2) = e^{i\omega(g_1,g_2)}U(g_1g_2).
  \end{align}
  If a complex number $e^{i\omega(g_1,g_2)}$ exists that satisfies this, $U(\cdot)$ is called a projection unitary representation of the group $G$. Here, $\omega(g_1,g_2)$ is a real number defined for two elements $g_1,g_2$ of $G$, and the set of complex numbers $\qty{e^{i\omega(g_1,g_2)}}$ is called the factor group of the projection unitary representation.
\end{definition}

Using the above definitions, we define free states and free operations in RTA.
These are respectively called symmetric states and covariant operations, and are defined as follows.

\begin{definition}
  A quantum state $\rho$ is called a symmetric state when it satisfies
  \begin{align}
    U(g) \rho U(g)^{-1} = \rho
  \end{align}
  for any element $g$ of the group $G$.
\end{definition}
\begin{definition}
  A CPTP-map $\mathcal{E}$ is called a covariant operation when it satisfies
  \begin{align}
    \mathcal{E}(U_{in}(g)\rho U_{in}(g)^{-1})  = U_{out}(g)\mathcal{E}(\rho)U_{out}(g)^{-1}
  \end{align}
  for any element $g$ of the group $G$. Here, $U_{in}(g),U_{out}(g)$ are the projection unitary representations of $g$ on the input and output Hilbert spaces of the map $\mathcal{E}$, respectively.
\end{definition}
In this paper, we consider the case where the group $G$ is the real number $\mathbb{R}$ or the one-dimensional unitary group $U(1)$.

A state that is not a symmetric state, such as the one described above, is called a resource state or an asymmetric state. In this case, it is necessary to measure how much resources a resource state has using a resource measure. For the group $\mathbb{R}$ or the one-dimensional unitary group $U(1)$, the SLD Fisher information is often used as a good resource measure.

\begin{definition}
  Using a periodic Hermite operator $H$, the projection unitary representation of the group $U(1)$ can be written as $U_t=e^{-iHt}$. In this case, the Fisher information for the quantum state family $\qty{U_t \rho U_t^\dag}$ of the unitary operator $U_t$ is defined as
  \begin{align}
    \calF_\rho^f(H) = \left. J^f_{U_t \rho U_t^\dag} \right|_{t=0}.
  \end{align}
  
\end{definition}
The following theorem holds for the $F_\rho^f(H)$ defined here.
\begin{theorem}
  If the spectral decomposition of $\rho$ is $\sum_i p_i \ketbra{\psi_i}{\psi_i}$, then $F_\rho^f(H)$ can be calculated as follows:
  \begin{align}
    \calF_\rho^f(H) = \sum_{i,j} \frac{(p_i - p_j)^2}{p_j f(\frac{p_i}{p_j})} \abs{ \bra{\psi_i} H \ket{\psi_j} }^2 .
  \end{align}
\end{theorem}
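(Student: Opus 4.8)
The plan is to specialize the general definition of the $f$-Fisher information in \eqref{eq:Fisher_def2} to the one-parameter family $\rho_t = U_t \rho U_t^\dagger = e^{-iHt}\rho e^{iHt}$ and to evaluate everything at $t=0$ in the eigenbasis $\{\ket{\psi_i}\}$ of $\rho$. The two ingredients I need are the tangent vector $\partial_t \rho_t|_{t=0}$ and the action of the superoperator $(\mathbb{J}^f_\rho)^{-1}$, both expressed through matrix elements in this basis.

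First I would differentiate the family to obtain $\partial_t \rho_t|_{t=0} = -i[H,\rho]$, whose matrix elements in the eigenbasis are $\bra{\psi_i}(-i[H,\rho])\ket{\psi_j} = i(p_i - p_j)\bra{\psi_i}H\ket{\psi_j}$, using $\rho\ket{\psi_j} = p_j\ket{\psi_j}$. Next, from the computation already carried out in the logarithmic-derivative theorem, the superoperator acts diagonally on matrix elements as $\bra{\psi_i}\mathbb{J}^f_\rho(X)\ket{\psi_j} = p_j f(p_i/p_j)\bra{\psi_i}X\ket{\psi_j}$, so $(\mathbb{J}^f_\rho)^{-1}$ multiplies the $(i,j)$ matrix element by $1/(p_j f(p_i/p_j))$. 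Equivalently, I can build the logarithmic derivative $L$ directly from \eqref{eq:logarithmic_derivative2} and use $\calF^f_\rho(H) = \langle L, L\rangle^f_\rho = \Tr[L\,\partial_t \rho_t]$, which holds because $\mathbb{J}^f_\rho(L) = \partial_t \rho_t$ by the defining relation of $L$.

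Then I would insert resolutions of the identity and carry out the trace $\Tr[\partial_t \rho_t\,(\mathbb{J}^f_\rho)^{-1}(\partial_t \rho_t)]$. Orthonormality collapses the double basis sum to the pairs $(i,j)$, producing $i(p_i-p_j)\bra{\psi_i}H\ket{\psi_j}$ multiplied by $\tfrac{1}{p_j f(p_i/p_j)}\, i(p_j - p_i)\bra{\psi_j}H\ket{\psi_i}$. The two factors of $i$ give $-1$ while $(p_i-p_j)(p_j-p_i) = -(p_i-p_j)^2$ supplies another sign, so the signs cancel, and Hermiticity of $H$ turns $\bra{\psi_i}H\ket{\psi_j}\bra{\psi_j}H\ket{\psi_i}$ into $\abs{\bra{\psi_i}H\ket{\psi_j}}^2$. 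This yields exactly $\sum_{i,j}\frac{(p_i-p_j)^2}{p_j f(p_i/p_j)}\abs{\bra{\psi_i}H\ket{\psi_j}}^2$.

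I do not expect a serious obstacle; the computation is routine once the two matrix-element rules are in place. The only point requiring care is the bookkeeping of the factors of $i$ together with the sign from $(p_i-p_j)(p_j-p_i)$, and — if one evaluates $(\mathbb{J}^f_\rho)^{-1}$ on the right-hand slot rather than the left — confirming that the resulting denominator $p_i f(p_j/p_i)$ coincides with the stated $p_j f(p_i/p_j)$. This equality is precisely the defining symmetry $f(x) = x f(x^{-1})$ of a standard operator monotone function, which guarantees that both orderings produce the same expression.
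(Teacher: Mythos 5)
Your derivation is correct; the paper states this theorem without proof, and your computation is exactly the one its framework intends: differentiate the family to get $\partial_t\rho_t|_{t=0}=-i[H,\rho]$ with matrix elements $i(p_i-p_j)\bra{\psi_i}H\ket{\psi_j}$, apply the diagonal action of $(\mathbb{J}^f_{\rho})^{-1}$ already established in the logarithmic-derivative theorem, and evaluate the trace in \eqref{eq:Fisher_def2}. Your closing observation that the symmetry $f(x)=xf(x^{-1})$ reconciles the two possible denominators $p_i f(p_j/p_i)$ and $p_j f(p_i/p_j)$ is the only point that genuinely needs checking, and you handle it correctly.
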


Among the Fisher information defined as above, the Fisher information when $f=(1+x)/2$ is called the SLD Fisher information, and is denoted as $\calF_\rho(H)$ in the following.

Resource indicators for the case of a general group $G$ are also being clarified. In particular, for Lie groups, it is known that the Fisher information matrix becomes a resource indicator \cite{Kudo}. In the case of discrete finite groups, it has recently been shown that the absolute value of the charactefunction'sction's log determines the transformation rate of the iid state, which is a resource indicator \cite{ST}.

\section{Cramér-Rao inequality}\label{Ap:Cramer-Rao}
The one-parameter generalized Cramér-Rao inequality is an inequality concerning the time change of the expectation value of a physical quantity, and is expressed as follows.
\begin{theorem}
  For a smooth quantum state family $\qty{\rho_t}_t$ and a physical quantity $A$, the following inequality holds for the time change of the expectation value of $A$:
  \begin{align}
    \abs{\frac{\partial}{\partial t} \langle A \rangle_{\rho_t} } \leq \sqrt{J_{\rho_t}^f}\sqrt{V_{\rho_t}^f(A)}.
    \label{eq:generalized_CL_ineq}
  \end{align}
\end{theorem}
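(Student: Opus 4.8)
The plan is to recognize the left-hand side as the magnitude of an $f$-inner product and then apply the Cauchy--Schwarz inequality stated in Appendix~\ref{Ap:inner_product_general_quantities}. The two ``vectors'' will be the centered observable $A-\langle A\rangle_{\rho_t}$ and the logarithmic derivative $L$ determined by $\frac{\partial\rho_t}{\partial t}=\mathbb{J}^f_{\rho_t}(L)$.

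First I would differentiate, using that $A$ is time-independent, to write $\frac{\partial}{\partial t}\langle A\rangle_{\rho_t}=\Tr[A\dot\rho_t]$. The essential preparatory step is to recenter $A$: since each $\rho_t$ is normalized, $\Tr[\dot\rho_t]=\frac{\partial}{\partial t}\Tr[\rho_t]=0$, so subtracting the scalar $\langle A\rangle_{\rho_t}$ changes nothing and gives $\frac{\partial}{\partial t}\langle A\rangle_{\rho_t}=\Tr[(A-\langle A\rangle_{\rho_t})\dot\rho_t]$. Substituting $\dot\rho_t=\mathbb{J}^f_{\rho_t}(L)$ and comparing with the definition \eqref{eq:inner_product} identifies this with the inner product $\langle A-\langle A\rangle_{\rho_t},L\rangle^f_{\rho_t}$.

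I would then invoke Cauchy--Schwarz on this inner product, obtaining $\bigl|\langle A-\langle A\rangle_{\rho_t},L\rangle^f_{\rho_t}\bigr|^2\le \langle A-\langle A\rangle_{\rho_t},A-\langle A\rangle_{\rho_t}\rangle^f_{\rho_t}\,\langle L,L\rangle^f_{\rho_t}$. By the definitions \eqref{eq:variance_f} of the $f$-variance and \eqref{eq:Fisher_def1} of the Fisher information, the two factors on the right are precisely $V^f_{\rho_t}(A)$ and $J^f_{\rho_t}$; taking square roots then yields the stated inequality.

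The main obstacle is less a computation than a careful justification of the recentering and of the identification of $\frac{\partial}{\partial t}\langle A\rangle_{\rho_t}$ --- a manifestly real number --- with the inner product $\langle A-\langle A\rangle_{\rho_t},L\rangle^f_{\rho_t}$. This identification is legitimate because $\Tr[\dot\rho_t]=0$ removes the scalar contribution exactly, and because $A-\langle A\rangle_{\rho_t}$ and $L$ are both Hermitian, so the $f$-inner product reproduces the real left-hand side rather than a complex quantity whose modulus we would merely be bounding. With that point settled, the remainder is a direct application of the inner-product machinery of Appendix~\ref{Ap:inner_product_general_quantities}.
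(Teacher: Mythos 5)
Your proposal is correct and follows essentially the same route as the paper's own proof: recenter $A$ using $\Tr[\dot\rho_t]=0$, identify $\Tr[A_0\,\mathbb{J}^f_{\rho_t}(L)]$ with the $f$-inner product $\langle A_0,L\rangle^f_{\rho_t}$, and apply Cauchy--Schwarz to obtain $\sqrt{V^f_{\rho_t}(A)}\sqrt{J^f_{\rho_t}}$. Your additional remark on why the identification yields a real quantity is a minor refinement the paper leaves implicit.
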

\begin{proof}  Define $A_0$ as
  \begin{equation*}
    A_0 \coloneqq A - \langle A \rangle_{\rho_t}I,
  \end{equation*}
then
  \begin{equation*}
    \langle A \rangle_{\rho_t} \Tr \qty[ \frac{\partial \rho_t}{\partial t} ] = \langle A \rangle_{\rho_t} \frac{\partial}{\partial t} \Tr \qty[ \rho_t ] = 0,
  \end{equation*}
  so
  \begin{align*}
    \abs{\frac{\partial}{\partial t} \langle A \rangle_{\rho_t} } 
    &= \abs{ \Tr \qty[ A \frac{\partial \rho_t}{\partial t} ] } = \abs{ \Tr \qty[ A_0 \frac{\partial \rho_t}{\partial t} ] }.
  \end{align*}
  Here, using the Hermite operator $L$ as the logarithmic derivative,
  \begin{align}
    \abs{ \Tr \qty[ A_0 \frac{\partial \rho_t}{\partial t} ] } = \abs{ \Tr \qty[ A_0 \mathbb{J}_{\rho_t}^f(L) ] }
  \end{align}
  can be done. Since this is an $f$-inner product of $A_0$ and $\mathbb{J}_{\rho_t}^f(L)$, by the Cauchy-Schwarz inequality
  \begin{align*}
    \langle A_0, L \rangle_{\rho_t}^f 
    &\leq \sqrt{ \langle A_0, A_0 \rangle_{\rho_t}^f } \sqrt{ \langle L, L \rangle_{\rho_t}^f } \\
    &= \sqrt{V_{\rho_t}^f(A)} \sqrt{J_{\rho_t}^f},
  \end{align*}
  the proof is completed.
\end{proof}

\section{Proof of Theorem \ref{th:current_SL}}\label{proof1}

\begin{proof}
  The current of a time-independent physical quantity is defined as follows.
  \begin{align}
    \expect{\dot{Q}}_X\qty(\rho_t) := \Tr[X\dot{\rho}_t].\label{eq:current_def}
  \end{align}
 As shown in \eqref{eq:alt_decom}, we can rewrite the quantum master equation as follows:
  \begin{align}
    \dot{\rho}_t = -\frac{i}{\hbar}\qty[H+H_\calD,\rho_t] + \calD_{d}\qty[\rho_t] .\label{eq:Lindblad4}
  \end{align}
  Substituting this into \eqref{eq:current_def}, we obtain
  \begin{align}
      \abs{\langle \dot{Q}\rangle_X\qty(\rho_t)} 
      &= \abs{ \Tr \qty[ -\frac{i}{\hbar}X\qty[H+H_\calD,\rho_t] + X\calD_{d}\qty[\rho_t] ] } \notag\\
      &\leq \abs{ \Tr \qty[ X \cdot\qty(-\frac{i}{\hbar})\qty[H+H_\calD,\rho_t] ]} + \abs{ \Tr\qty[X\calD_{d}\qty[\rho_t]] } .
    \end{align}
    For the first term, the part of $-\frac{i}{\hbar}\qty[H+H_\calD,\rho_t]$ can be regarded as a unitary time evolution with respect to the Hamiltonian $H+H_\calD$. Therefore, using the one-parameter generalized Cramér-Rao inequality \eqref{eq:generalized_CL_ineq}, we can transform it as follows:
    \begin{align}
     \abs{ \Tr \qty[ X \cdot\qty(-\frac{i}{\hbar})\qty[H+H_\calD,\rho_t] ]}
      &\leq \frac{1}{\hbar}\sqrt{\calF_{\rho_t}(H+H_\calD)}\sqrt{V_{\rho_t}(X)}.
    \end{align}
    For the second term, it can be evaluated as follows:
{\small
\begin{widetext}   
  \begin{align}
    \abs{ \Tr\qty[X\calD_{d}\qty[\rho_t]] } 
    &= \abs{ 
    \Tr\qty[X\sum_m \bra{m}\calD\qty[\rho_t]\ket{m}\ketbra{m}{m}] } \notag\\
    &\leq \sum_m \abs{ \bra{m}X\ket{m}\bra{m}\calD\qty[\rho_t]\ket{m} }\notag \\
    &= \sum_m \abs{ \bra{m}X\ket{m}} \abs{\sum_{\omega,\alpha,n(\ne m)}\qty(W^{\omega,\alpha}_{m,n}p_n - W^{-\omega,\alpha}_{n,m}p_m) } \notag\\
    &\leq \sum_m \abs{ \bra{m}X\ket{m}} \sqrt{ \sum_{\omega,\alpha,n(\ne m)} \frac{ \qty(W^{\omega,\alpha}_{m,n}p_n - W^{-\omega,\alpha}_{n,m}p_m)^2 }{f(W^{\omega,\alpha}_{m,n}p_n, W^{-\omega,\alpha}_{n,m}p_m)} \sum_{\omega,\alpha,n(\ne m)} f(W^{\omega,\alpha}_{m,n}p_n, 
    W^{-\omega,\alpha}_{n,m}p_m) } \notag\\
    &\leq \sqrt{ \sum_{\omega,\alpha,n \ne m} \frac{ \qty(W^{\omega,\alpha}_{m,n}p_n - W^{-\omega,\alpha}_{n,m}p_m)^2 }{f(W^{\omega,\alpha}_{m,n}p_n, W^{-\omega,\alpha}_{n,m}p_m)} } \sqrt{ \sum_{\omega,\alpha,n \ne m} \abs{ \bra{m}X\ket{m} }^2 f(W^{\omega,\alpha}_{m,n}p_n, W^{-\omega,\alpha}_{n,m}p_m)} \notag \\
    &\leq \sqrt{\frac{1}{2}\dot{\sigma}M_X(\rho_t)} .
  \end{align}
\end{widetext}
}
In the second and third inequalities, the Cauchy-Schwarz inequality is repeatedly used. 
In the third line, we used
\begin{widetext}   
  \begin{align}
   \bra{m}\calD\qty[\rho_t]\ket{m} 
   &= \sum_{\omega, \alpha} \gamma_\alpha(\omega)\qty( \sum_n p_n\abs{\bra{m}L_{\omega,\alpha}\ket{n}}^2-p_m\bra{m}L_{\omega,\alpha}^\dagger L_{\omega,\alpha}\ket{m}) \notag\\
   &= \sum_{\omega, \alpha, n(\ne m)}\gamma_\alpha(\omega)\qty( p_n\abs{\bra{m}L_{\omega,\alpha}\ket{n}}^2-p_m\abs{\bra{n}L_{\omega,\alpha}\ket{m}}^2) \notag\\
   &= \sum_{\omega, \alpha, n(\ne m)}\qty( \gamma_\alpha(\omega)\abs{\bra{m}L_{\omega,\alpha}\ket{n}}^2p_n-\gamma_\alpha(-\omega)\abs{\bra{n}L_{-\omega,\alpha}\ket{m}}^2 p_m).
  \end{align}
\end{widetext}
Regarding the last inequality, the following equation holds:
  \begin{widetext}
  \begin{align}
    \sum_{\omega,\alpha,n \ne m} \frac{ \qty(W^{\omega,\alpha}_{m,n}p_n - W^{-\omega,\alpha}_{n,m}p_m)^2 }{f(W^{\omega,\alpha}_{m,n}p_n, W^{-\omega,\alpha}_{n,m}p_m)} 
    \leq \sideset{}{^\prime}\sum_{\omega,\alpha,n, m} \qty(W^{\omega,\alpha}_{m,n}p_n - W^{-\omega,\alpha}_{n,m}p_m) \ln \frac{W^{\omega,\alpha}_{m,n}p_n}{W^{-\omega,\alpha}_{n,m}p_m} = 2\dot{\sigma}.
  \end{align}
  \end{widetext}
  %Also, $M_X\qty(\rho_t)$ is a quantity defined as follows.
  %\begin{align}
  %  M_X\qty(\rho_t) := 2 \sideset{}{^\prime}\sum_{\omega,\alpha,n,m} \abs{\bra{m}X\ket{m}}^2 f(W^{\omega,\alpha}_{m,n}p_n, W^{-\omega,\alpha}_{n,m}p_m)
  %\end{align}
  %where $f(a,b):=(a-b)/(\log a-\log b)$.
\end{proof}

\section{Proof of Theorem \ref{th:QFI_mobility_bound}}\label{proof2}
\begin{proof}
  Substituting $X=X^\prime := \frac{1}{2}\mathrm{sgn}\qty(\dot{\rho})$ into \eqref{eq:current_SL_mobility}, the left-hand side becomes as follows:
\begin{align}
  \abs{\expect{\dot{Q}}_{X^\prime}\qty(\rho_t)}= \abs{\Tr[X^\prime\dot{\rho}_t]} = \norm{\dot{\rho}_t}_{\Tr}.
\end{align}
Next, regarding the first term on the right-hand side,
\begin{align}
  V_{\rho_t}(X^\prime) = \Tr \qty( X^{\prime 2} \rho_t ) - \qty(\Tr \qty( X^\prime \rho_t ))^2 \leq \frac{1}{4} \label{eq:X_var}
\end{align}
holds. 
This is easily understood from the fact that $X^{\prime 2} = \frac{1}{4}I $.
Regarding the second term on the right-hand side, $M_{X'}(\rho_t) \leq M$ holds. This is shown from the fact that $\abs{\bra{m}X^\prime\ket{m}}^2 \leq \frac{1}{4}$. In fact,
\begin{align}
  \abs{\bra{m}X^\prime\ket{m}} 
  &= \frac{1}{2}\abs{ \sum_k \mathrm{sgn}(q_k) \bra{m}\ket{\psi_k}\bra{\psi_k}\ket{m} } \\
  &\leq \frac{1}{2}\sum_k \abs{\mathrm{sgn}(q_k)} \abs{\bra{m}\ket{\psi_k}\bra{\psi_k}\ket{m}} \\
  &= \frac{1}{2} \bra{m} \sum_k \ketbra{\psi_k}{\psi_k} \ket{m} \\
  &= \frac{1}{2} \bra{m}\ket{m} = \frac{1}{2}.
\end{align}
Here, we used the spectral decomposition of $\dot{\rho} = \sum_k q_k \ket{\psi_k}\bra{\psi_k}$.
Therefore,
\begin{align}
  || \dot{\rho} ||_{\Tr} 
  &\leq \frac{1}{\hbar} \sqrt{F_{\rho_t}(H + H_{\mathcal{D}})} \sqrt{V_{\rho_t}(X^\prime)} + \sqrt{\frac{1}{2}\dot{\sigma}M_{X^\prime}\qty(\rho_t)} \label{eq:current_bound}\\
  &\leq \frac{1}{2\hbar} \sqrt{F_{\rho_t}(H + H_{\mathcal{D}})} + \sqrt{\frac{1}{2}\dot{\sigma}M^\prime} .
\end{align}
\end{proof}

\section{Setting of numerical calculation system}\label{Ap:qutip}
We performed a simulation of the following quantum master equation using the Mathematica:
\begin{widetext}
\begin{align}
  \dot{\rho}_t =  -\frac{i}{\hbar}[H,\rho_t] + 
  \gamma_-\qty[ \sigma_- \rho_t \sigma_+ - \frac{1}{2}\qty\big{ \sigma_+ \sigma_- , \rho_t } ] + \gamma_+ \qty[ \sigma_+ \rho_t \sigma_- - \frac{1}{2}\qty\big{ \sigma_- \sigma_+ , \rho_t } ]. \label{eq:Lindblad5}
\end{align}
\end{widetext}
Here, $\gamma_-=\gamma_0 \omega_0(N(\omega_0)+1)$, $\gamma_+=\gamma_0 \omega_0 N(\omega_0)$, and $N(\omega_0)=1/(e^{\beta \omega_0}-1)$. 

The system settings are as follows.
\begin{itemize}
  \item Hamiltonian $H = \frac{\omega_0}{2} \sigma_z =\mqty[ 0.5 & 0 \\ 0 & -0.5 ]$ ($\omega_0 = 1$).
  \item Inverse temperature $\beta \omega_0 = -\log \frac{\gamma_+}{\gamma_-} = 0.288$ (assume $N(\omega_0)=3.0$).
  \item Initial state $\rho_0 = \mqty[ 0.7 & 0.2 + 0.1i \\ 0.2 - 0.1i & 0.3 ]$.
  \item Frequency of state transition $\gamma_0 = 0.5$.
\end{itemize}
Here, $\sigma_+, \sigma_-$ are jump operators corresponding to absorption and emission, respectively. 
We set $\hbar = 1$.

\end{document}